\documentclass[10pt]{article}
\usepackage{setspace}

\usepackage[a4paper]{geometry}
\usepackage{silence}
\usepackage{tikz}
\usetikzlibrary{arrows.meta, positioning, calc}

\usepackage[english]{babel}
\usepackage[utf8]{inputenc}
\usepackage[T1]{fontenc}

\usepackage[dvipsnames]{xcolor}

\usepackage{amsmath, amssymb, amsfonts, amsthm, mathtools}
\usepackage{mathrsfs}
\usepackage{braket}
\usepackage{cancel}
\usepackage{latexsym}

\usepackage{graphicx}
\usepackage{float}
\usepackage{subcaption}
\usepackage{listings}
\usepackage{stackengine}

\usepackage{blindtext}
\usepackage{verbatim}
\usepackage{array}
\usepackage{setspace}
\usepackage{multicol}
\usepackage{fancybox}
\usepackage{rotating}
\usepackage{enumerate}
\usepackage{textcomp}
\usepackage{sectsty}
\usepackage{comment}
\usepackage{pifont}
\usepackage{empheq}
\usepackage{titlesec}
\usepackage{abstract}
\usepackage[autostyle=true]{csquotes}

\newtheorem{theorem}{Theorem}
\newtheorem{proposition}{Proposition}

\usepackage{tikz}
\usetikzlibrary{arrows.meta,shapes}

\definecolor{micolor}{RGB}{230,240,255}
\xdefinecolor{ritsumeikan}{RGB}{217, 180, 34}
\definecolor{darkbackground}{RGB}{0, 52, 64}
\xdefinecolor{colormio}{RGB}{217, 180, 34}

\tikzset{
  mynode/.style={
    draw,
    chamfered rectangle,
    minimum width=3.5cm,
    minimum height=0.7cm,
    align=center,
    fill=micolor!50
  },
  myarrow/.style={
    -{Stealth[length=3mm]},
    thick,
    shorten >=8pt,
    shorten <=8pt
  },
  myarrowdbl/.style={
    {Stealth[length=3mm]}-{Stealth[length=3mm]},
    thick,
    shorten >=8pt,
    shorten <=8pt
  },
  pathlabel/.style={midway, sloped, fill=white, inner sep=1pt, font=\footnotesize},
  pathnum/.style={midway, sloped, circle, draw, fill=white, inner sep=1pt, font=\scriptsize}
  route/.style={draw, thick, color=blue},
}

\usepackage[most]{tcolorbox}
\tcbset{
  my math box/.style={
    standard jigsaw,
    colback=ritsumeikan!10, colframe=ritsumeikan, size=fbox, arc=3pt, boxrule=0.5pt,
    opacityback=0.82,
  },
  my math boxy/.style={
    standard jigsaw,
    colback=ritsumeikan!20, colframe=ritsumeikan, size=fbox, arc=3pt, boxrule=0.5pt,
    opacityback=0.82,
  }
}

\newtcbox{\mymath}[1][]{%
    nobeforeafter, math upper, tcbox raise base,
    enhanced, colframe=blue!30!black,
    colback=blue!30, boxrule=1pt,
    #1}

\titleformat{\paragraph}{\normalfont\normalsize\bfseries}{}{0pt}{}
\titlespacing*{\paragraph}{0pt}{1em}{0.7em}

\makeatletter
\renewcommand\@makefnmark{\hbox{\@textsuperscript{\normalfont\color{colormio!75!black}\@thefnmark}}}
\makeatother

\usepackage{etoolbox}
\makeatletter
\pretocmd{\tagform@}{\color{gray!50!black}}{}{}
\makeatother

\usepackage[backend=biber,style=numeric-comp,sorting=none,sortcites]{biblatex}
\usepackage{hyperref}
\hypersetup{
  colorlinks = true,
  linkcolor  = blue!80!black, 
  citecolor  = red,       
  filecolor  = red,
  urlcolor   = black
}

\usepackage[normalem]{ulem}

\usepackage{setspace}
\begin{document}
\title{\Large{\textbf{Reconstruction Between Generalized Hybrid Metric--Palatini Gravity and $\Phi(R,\phi,X)$ Theories}}}

\author{   \small{Jonathan Ramírez\footnote{\href{mailto:ramirez.jsg@gmail.com}{ramirez.jsg@gmail.com}}}\\
 \small{
 \textit{Departamento de Física Teórica, Instituto de Física,}}\\
\textit{\small{Universidade do Estado do Rio de Janeiro,}}\\
\textit{\small{Rua São Francisco Xavier 524, Maracanã,}} \\ 
\textit{\small{CEP 20550-013, Rio de Janeiro, Brazil.}}
}

\date{}
\maketitle

\begin{abstract}
We develop a local reconstruction framework between \(\Phi(R,\phi,X)\) theories with linear dependence on \(X\) and generalized hybrid metric--Palatini gravity. The construction is formulated in vacuum in the Einstein frame, where both formulations can be written as two-scalar theories with the same field-space geometry. The framework provides a practical method for finding \(\Phi(R,\phi,X)\) and \(f(R,\mathcal R)\) functions that describe the same regular Einstein-frame two-scalar sector. Starting from a given \(\Phi(R,\phi,X)\) model, we derive the equation that determines the compatible hybrid functions \(f(R,\mathcal R)\) and show that it has a Clairaut-type structure. We also show that the inverse reconstruction is not unique: a regular hybrid Einstein-frame potential determines a family of compatible \(\Phi(R,\phi,X)\) theories, parametrized by the kinetic coupling. Explicit examples illustrate the reconstruction procedure, its domain of validity, and the translation of model parameters between the \(\Phi(R,\phi,X)\) and \(f(R,\mathcal R)\) formulations.
\end{abstract}

\section{Introduction}
\label{sec:Introduction}

Modified theories of gravity provide a systematic framework for testing the geometrical foundations of General Relativity (GR) and for exploring possible
extensions of the Einstein--Hilbert action in regimes where additional gravitational degrees of freedom may become relevant \cite{Clifton2011,Nojiri2010,Berti2015}. Among the best studied examples are metric \(f(R)\) theories, in which the gravitational Lagrangian is promoted to a nonlinear function of the Ricci scalar \cite{Sotiriou:2008rp,felice}. Such models also provide a useful starting point for more general scalar--tensor
extensions, where extra scalar modes may arise either from nonlinear curvature terms or from independent scalar fields already present in the action.

A broad generalization is obtained by allowing the Lagrangian to depend on the metric Ricci scalar \(R\), on a scalar field \(\phi\), and on its kinetic
density \(X\), leading to theories of the form \(\Phi(R,\phi,X)\). This class includes generalized \(f(R,\phi)\) models and k-essence-type constructions
\cite{Bahamonde2015,Nojiri2019,Nilok2009,Sotiriou:2011dz}, and has been used in cosmological applications ranging from early- and late-time acceleration
\cite{Amendola1993,Shinji2003} to perturbation theory
\cite{Hwang2002,Garriga:1999vw} and exact gravitational configurations \cite{Bahamonde2018,Malik2022}. In the present work we focus on the subclass in which the dependence on \(X\) is linear. Under suitable local invertibility and positivity conditions, this subclass admits an Einstein-frame representation as GR coupled to two interacting scalar fields
\cite{Maeda1989,Sho2015,Dhimiter2020,Anirudh2020}.

A similar two-scalar Einstein-frame structure arises in generalized hybrid
metric--Palatini (GH) gravity. In this theory the action depends on both the
metric Ricci scalar \(R\) and a Palatini scalar \(\mathcal R\), the latter
being constructed from an independent torsionless connection \cite{tamanini}.
Its scalar--tensor representation can be formulated as a bi-scalar--tensor
theory which, in the Einstein frame, describes GR coupled to two scalar degrees
of freedom. The resulting scalar field-space geometry coincides with that of
the \(\Phi(R,\phi,X)\) subclass considered here.

This common Einstein-frame kinetic structure therefore motivates a reconstruction problem between the two formulations. However, sharing the same kinetic geometry in the Einstein frame does not by itself imply an equivalence between the original Jordan-frame actions. In both theories, the scalar--tensor description is valid only on regular branches. In the \(\Phi(R,\phi,X)\) case, one must require, in particular, local invertibility of the curvature sector, a positive conformal factor, and positivity of the
kinetic coupling. In the GH case, one must require positivity of the Einstein-frame conformal factor, together
with the non-degeneracy of the Legendre map associated with \(f(R,\mathcal R)\).

Consequently, the relevant question is not whether the two Jordan-frame actions are globally equivalent, but whether their regular scalar--tensor branches can be reconstructed locally so as to generate the same Einstein-frame two-field sector. This reconstruction is relevant for modified-gravity model building because it shows how the same regular Einstein-frame two-field dynamics can be represented by different Jordan-frame actions, namely by a \(\Phi(R,\phi,X)\) theory or by a generalized hybrid metric--Palatini theory. The point of the present construction is not merely to compare two Einstein-frame actions with the same kinetic geometry, but to formulate the comparison at the level of regular scalar--tensor branches, where the distinct Legendre maps, conformal transformations, and positivity conditions of the two Jordan-frame theories must be treated explicitly. In this way, the reconstruction identifies the conditions under which the correspondence is admissible.

The aim of this work is to formulate this reconstruction problem constructively in vacuum. Starting from a regular \(\Phi(R,\phi,X)\) theory
with linear dependence on \(X\), we derive the Clairaut-type equation that
determines the corresponding generalized hybrid functions \(f(R,\mathcal R)\), when they exist. We show that the complete linear integral of this equation is degenerate from the GH scalar--tensor point of view, since its Hessian determinant vanishes. Hence, admissible GH branches must be obtained from the singular envelope and must still satisfy the required positivity and non-degeneracy conditions. This criterion separates formal solutions of the reconstruction equation from genuinely regular hybrid scalar--tensor branches.

We also study the inverse problem. Given a regular GH Einstein-frame potential, we reconstruct the compatible \(\Phi(R,\phi,X)\) representatives. The inverse map is not unique: it depends on the choice of scalar-field coordinate, or equivalently on kinetic coupling. Thus, a single regular GH Einstein-frame scalar sector determines an equivalence class of \(\Phi(R,\phi,X)\) theories rather than a unique Jordan-frame action. This
non-uniqueness is a structural feature of the correspondence and reflects the freedom in parametrizing the independent scalar field in the
\(\Phi(R,\phi,X)\) formulation.

The present construction is technically connected to previous reconstruction approaches for Einstein-frame scalar sectors in modified gravity and, in
particular, to the relation between higher-derivative scalar--tensor theories and GH gravity discussed in \cite{Ramirez2026}. However, the class of theories considered here is structurally different. In the previous case, the additional
scalar degrees of freedom are generated by higher-derivative curvature terms in the gravitational sector. In the present work, instead, the starting point is a \(\Phi(R,\phi,X)\) theory, where \(\phi\) is an independent scalar field and \(X\) is its kinetic density. The reconstruction problem is therefore formulated for a different Jordan-frame functional dependence.

This paper is organized as follows. In Sec.~\ref{sec:EF-Representation of GST} we review the Einstein-frame representation of the relevant \(\Phi(R,\phi,X)\) theories. In Sec.~\ref{sec:GH-gravity} we summarize the
scalar--tensor formulation of GH gravity. In Sec.~\ref{sec:Reconstruction-Framework} we prove the forward and inverse reconstruction results and analyze the Clairaut branches. In Sec.~\ref{sec:Applications} we present the explicit applications. Finally, in Sec.~\ref{sec:Conclusions}, we summarize our main results.

\section{\texorpdfstring{Einstein-Frame Representation of $\Phi(R,\phi,X)$ Gravity}~}
\label{sec:EF-Representation of GST}

We consider the vacuum sector of a class of modified gravity theories described by the action\footnote{With $\kappa^2 = 8\pi G$, $c=1$.}
\begin{equation}
    S=\frac{1}{2\kappa^2}\int d^4x\sqrt{-g}\,
    \Phi\left(R,\phi,X\right),
    \label{action-f(R,phi,X)}
\end{equation}
where 
\begin{equation}
    X\equiv-\frac{1}{2}g^{\mu\nu}\nabla_{\mu}\phi\nabla_{\nu}\phi .
\end{equation}
In this work we restrict attention to the subclass in which the dependence on the
kinetic term is linear,
\begin{equation}
    \Phi(R,\phi,X)=h(R,\phi)+P(\phi)X-U(\phi),
    \label{f(R,phi,X)}
\end{equation}
with \(h(R,\phi)\) nonlinear in \(R\). The functions \(P(\phi)\) and \(U(\phi)\)
are otherwise arbitrary. This subclass admits a scalar--tensor representation
which, under the regularity conditions specified below, can be expressed in the
Einstein frame as a two-scalar theory \cite{Maeda1989,Anirudh2020}. The linear
dependence on \(X\) is essential for obtaining, after an appropriate field
redefinition, the two-scalar kinetic structure employed in the reconstruction
procedure developed in this work.

To make the scalar--tensor structure explicit, we introduce an auxiliary field
\(\xi\) and write a dynamically equivalent Jordan-frame action as
\begin{equation}
S_J=\frac{1}{2\kappa^2}\int d^4x\,\sqrt{-g}\left[
h(\xi,\phi)+\gamma(R-\xi)
-\frac{1}{2}P(\phi)\,g^{\mu\nu}\nabla_{\mu}\phi\nabla_{\nu}\phi
-U(\phi)\right],
\label{action-Jframe}
\end{equation}
where \(\gamma\) is a Lagrange multiplier. Variation with respect to \(\gamma\)
enforces \(\xi=R\), while variation with respect to \(\xi\) yields
\begin{equation}
    \gamma=h_{,\xi}(\xi,\phi),
    \label{Theta}
\end{equation}
where the comma in the subscript denotes a partial derivative with respect to the indicated argument, i.e., $h_{,\xi}\equiv \partial h/\partial \xi$.
We assume that this relation can be locally inverted, namely
\begin{equation}
    h_{,\xi\xi}(\xi,\phi)\neq0 .
    \label{invertibility-h}
\end{equation}
On this regular branch, one can solve for \(\xi=\xi(\gamma,\phi)\), and the action
takes the scalar--tensor form
\begin{equation}
S_J=\int d^4x\,\sqrt{-g}\left[
\frac{\gamma}{2\kappa^2}R
-\frac{1}{4\kappa^2}P(\phi)\,
g^{\mu\nu}\nabla_{\mu}\phi\nabla_{\nu}\phi
-U_J(\gamma,\phi)
\right],
\label{eq:Jordan-ST-Phi}
\end{equation}
where
\begin{equation}
U_J(\gamma,\phi)=\frac{1}{2\kappa^2}
\left[
\gamma\,\xi(\gamma,\phi)
-h(\xi(\gamma,\phi),\phi)
+U(\phi)
\right].
\label{potential-UJ}
\end{equation}

The passage to the Einstein frame requires a positive conformal factor. We therefore
restrict to the branch
\begin{equation}
    \gamma=h_{,\xi}(\xi,\phi)>0.
    \label{condition-gamma-positive}
\end{equation}
Applying the conformal transformation
\begin{equation}
    \tilde g_{\mu\nu}=\gamma\,g_{\mu\nu},
\end{equation}
the action becomes
\begin{equation}
S_E=\int d^4x\,\sqrt{-\tilde g}
\left[
\frac{\tilde R}{2\kappa^2}
-\frac{3}{4\kappa^2\gamma^2}
\tilde g^{\mu\nu}\tilde\nabla_{\mu}\gamma\tilde\nabla_{\nu}\gamma
-\frac{1}{4\kappa^2}\gamma^{-1}P(\phi)
\tilde g^{\mu\nu}\tilde\nabla_{\mu}\phi\tilde\nabla_{\nu}\phi
-\gamma^{-2}U_J(\gamma,\phi)
\right].
\label{eq:Einstein-P}
\end{equation}
Introducing the field
\begin{equation}
    \gamma=e^{\sqrt{\frac{2}{3}}\kappa\chi},
    \label{PHI}
\end{equation}
the kinetic term associated with the conformal mode becomes canonical.

Finally, we introduce a new scalar degree of freedom \(\sigma\) through an
invertible field redefinition,
\begin{equation}
\sigma=\mathcal J(\phi)
\equiv
\frac{1}{\sqrt{2}\,\kappa}
\int^\phi \sqrt{P(\bar\phi)}\,d\bar\phi+C.
\label{sigma-P(phi)}
\end{equation}
This transformation is real and locally invertible provided
\begin{equation}
    P(\phi)>0,
    \qquad
    \mathcal J'(\phi)\neq0 .
    \label{condition-P-positive}
\end{equation}
The integration constant \(C\) has no physical significance in the general
construction and may be absorbed into a shift of \(\sigma\). In explicit examples
we set \(C=0\) unless otherwise specified.

With these definitions, the action assumes the two-field Einstein-frame form
\begin{equation}
S_E=
\int d^4x\sqrt{-\tilde g}
\left[
\frac{\tilde R}{2\kappa^2}
-\frac{1}{2}\tilde g^{\mu\nu}
\nabla_\mu\chi\nabla_\nu\chi
-\frac{1}{2}e^{-\sqrt{\frac{2}{3}}\kappa\chi}
\tilde g^{\mu\nu}\nabla_\mu\sigma\nabla_\nu\sigma
-\tilde W(\chi,\sigma)
\right],
\label{eq:action-biscalar}
\end{equation}
where the Einstein-frame potential is given by
\begin{equation}
\tilde W(\chi,\sigma)
=\frac{1}{2\kappa^2}e^{-2\sqrt{\frac{2}{3}}\kappa\chi}
\left[e^{\sqrt{\frac{2}{3}}\kappa\chi}\xi\!\left(
e^{\sqrt{\frac{2}{3}}\kappa\chi},\mathcal J^{-1}(\sigma)
\right)-h\!\left(\xi\!\left(e^{\sqrt{\frac{2}{3}}\kappa\chi},\mathcal J^{-1}(\sigma)\right),
\mathcal J^{-1}(\sigma)\right)+U\!\left(\mathcal J^{-1}(\sigma)\right)\right].
\label{potential-f-r-phi-X}
\end{equation}

Equivalently, the scalar fields parametrize a two-dimensional field space with
line element
\begin{equation}
    d\ell^2=d\chi^2+
    e^{-\sqrt{\frac{2}{3}}\kappa\chi}d\sigma^2 .
    \label{eq:field-space-Phi}
\end{equation}
In the Einstein-frame coordinates \((\chi,\sigma)\), the field-space metric
takes the universal form \eqref{eq:field-space-Phi}, while the model-dependent
information is entirely encoded in the potential \(\tilde W(\chi,\sigma)\).
This structural property enables a direct comparison between the
Einstein-frame scalar sector of \(\Phi(R,\phi,X)\) gravity and the corresponding
scalar--tensor sector of GH gravity.

Accordingly, the regular Einstein-frame branch of the theory
\eqref{f(R,phi,X)} is characterized by the conditions
\begin{equation}
    h_{,R}>0,
    \qquad
    h_{,RR}\neq0,
    \qquad
    P(\phi)>0 .
    \label{eq:regularity-Phi-summary}
\end{equation}
These conditions define the domain of validity of the bidirectional
reconstruction framework formulated in Section \ref{sec:Reconstruction-Framework}.

\section{Generalized Hybrid Metric--Palatini Gravity}
\label{sec:GH-gravity}

GH gravity is defined by an action depending on both the
metric Ricci scalar \(R\) and the Palatini scalar \(\mathcal R\),
\begin{equation}
    S=\frac{1}{2\kappa^2}\int d^4x\sqrt{-g}\,
    f(R,\mathcal R),
    \label{action-GH}
\end{equation}
where \(R=g^{\mu\nu}R_{\mu\nu}\) is constructed from the Levi--Civita connection
of \(g_{\mu\nu}\), while \(\mathcal R=g^{\mu\nu}\mathcal R_{\mu\nu}\) is built
from an independent torsionless connection \(\widetilde\Gamma^\rho_{\mu\nu}\).
The corresponding Ricci tensor is
\begin{equation}
\mathcal R_{\mu\nu}
=
\widetilde{\Gamma}^{\lambda}_{~\mu\nu,\lambda}
-\widetilde{\Gamma}^{\lambda}_{~\mu\lambda,\nu}
+\widetilde{\Gamma}^{\sigma}_{~\mu\nu}
 \widetilde{\Gamma}^{\lambda}_{~\sigma\lambda}
-\widetilde{\Gamma}^{\sigma}_{~\mu\lambda}
 \widetilde{\Gamma}^{\lambda}_{~\sigma\nu}.
\end{equation}
In this section we recall the scalar--tensor representation of the theory,
following \cite{tamanini}, and identify the regular branch relevant for the
reconstruction developed below.

Variation of Eq.~\eqref{action-GH} with respect to the metric yields
\begin{equation}
    f_{,R}R_{\mu\nu}
    -\frac{1}{2}g_{\mu\nu}f
    -\left(\nabla_{\mu}\nabla_{\nu}-g_{\mu\nu}\Box\right)f_{,R}
    +f_{,\mathcal R}\mathcal R_{(\mu\nu)}
    =0
    \label{field-equation}
\end{equation}
Variation with respect to the independent connection gives
\begin{equation}
    \widetilde{\nabla}_{\rho}
    \left(
    \sqrt{-g}\,f_{,\mathcal R}g^{\mu\nu}
    \right)=0.
    \label{connection-equation}
\end{equation}
where $\widetilde{\nabla}_{\mu}$ denotes the covariant derivative in terms of $\widetilde{\Gamma}^\rho_{\mu\nu}$. Eq.~\eqref{connection-equation} 
can be solved by introducing $\sqrt{-h}h^{\mu\nu}=\sqrt{-g}f_{,\mathcal R}g^{\mu\nu}$, leading to
\begin{equation}
    \widetilde{\Gamma}_{\nu \lambda}^\mu=\frac{1}{2}h^{\mu\rho} (\partial_\nu h_{\rho \lambda}+\partial_\lambda h_{\rho \nu}-\partial_\rho h_{\nu \lambda} ) \label{gamma-palatini}.
\end{equation}
Hence, the connection 
$\widetilde{\Gamma}_{\nu \lambda}^\mu$ can be written as a Levi--Civita connection in terms of the conformal metric 
$h_{\mu\nu}=f_{,\mathcal R}g_{\mu\nu}$. Using Eq.~\eqref{gamma-palatini}, the Ricci tensor associated
with the connection $\tilde\Gamma$ can be written as
\begin{equation}
\mathcal R_{\mu\nu}=R_{\mu\nu}+\frac{3}{2(f_{,\mathcal R})^2}\partial_\mu f_{,\mathcal R} \partial_\nu f_{,\mathcal R}-\frac{1}{ f_{,\mathcal R}}\left(\nabla_{\mu}\nabla_{\nu}+\frac{1}{2}g_{\mu\nu}\Box\right)f_{,\mathcal R}.
\label{ricci-calig}
\end{equation}

To obtain the scalar--tensor representation of the GH theory in the absence of matter fields, two auxiliary scalar fields $\alpha$ and $\beta$ are introduced such that the action in Eq.~\eqref{action-GH} reads \cite{tamanini}
\begin{equation}
S=
\frac{1}{2\kappa^2}
\int d^4x\sqrt{-g}
\left[
f(\alpha,\beta)
+f_{,\alpha}(R-\alpha)
+f_{,\beta}(\mathcal R-\beta)
\right].
\label{action-ab}
\end{equation}
Variation with respect to \(\alpha\) and \(\beta\) yields
\begin{equation}
\begin{aligned}
f_{,\alpha\alpha}(R-\alpha)
+
f_{,\alpha\beta}(\mathcal R-\beta)&=0,\\
f_{,\alpha\beta}(R-\alpha)
+
f_{,\beta\beta}(\mathcal R-\beta)&=0.
\end{aligned}
\label{determinant-equations}
\end{equation}
Hence, the identification
\[
\alpha=R,\qquad \beta=\mathcal R
\]
is locally enforced provided the Hessian determinant is nonvanishing,
\begin{equation}
\Delta_f
\equiv
f_{,\alpha\alpha}f_{,\beta\beta}
-
f_{,\alpha\beta}^{\,2}
\neq0 .
\label{constrain-hessiano}
\end{equation}
This condition corresponds to the non-degeneracy of the Legendre map. It
excludes, in particular, functions linear in \(R\) and \(\mathcal R\), and plays
a central role in the classification of the Clairaut branches discussed below.

We now define the scalar fields
\begin{equation}
    \varphi=f_{,\alpha},
    \qquad
    \psi=-f_{,\beta}.
    \label{field-hybrid}
\end{equation}
The minus sign in the definition of \(\psi\) ensures that the scalar field
associated with the Palatini sector acquires the standard sign in the
Einstein-frame kinetic term on the branch \(\psi>0\). The scalar potential is
defined by
\begin{equation}
    V(\varphi,\psi)
    =
    -f(\alpha(\varphi,\psi),\beta(\varphi,\psi))
    +\varphi\,\alpha(\varphi,\psi)
    -\psi\,\beta(\varphi,\psi).
    \label{potencial-V-hibrida}
\end{equation}
The action then takes the form
\begin{equation}
S=
\frac{1}{2\kappa^2}
\int d^4x\sqrt{-g}
\left[
\varphi R-\psi\mathcal R
-
V(\varphi,\psi)
\right].
\label{action-ST-GH-before-connection}
\end{equation}

Taking the trace of Eq.~\eqref{ricci-calig}, it follows that
\begin{equation}
\mathcal R=R+\frac{3}{2\psi^2}\nabla_\mu\psi\nabla^\mu\psi
-\frac{3}{\psi}\Box\psi .
\label{Palatini-scalar-psi}
\end{equation}
Substituting Eq.~\eqref{Palatini-scalar-psi} into
Eq.~\eqref{action-ST-GH-before-connection} and discarding a total divergence,
one obtains the Jordan-frame scalar--tensor representation
\begin{equation}
S=
\frac{1}{2\kappa^2}
\int d^4x\sqrt{-g}
\left[
(\varphi-\psi)R
-\frac{3}{2\psi}\nabla_\mu\psi\nabla^\mu\psi
-
V(\varphi,\psi)
\right].
\label{action-ST-GH}
\end{equation}
Defining
\begin{equation}
    \vartheta=\varphi-\psi,
    \qquad
    W(\vartheta,\psi)=V(\vartheta+\psi,\psi),
\end{equation}
the action becomes
\begin{equation}
S=
\frac{1}{2\kappa^2}
\int d^4x\sqrt{-g}
\left[
\vartheta R
-\frac{3}{2\psi}\nabla_\mu\psi\nabla^\mu\psi
-
W(\vartheta,\psi)
\right].
\label{action-ST-GH-vartheta}
\end{equation}
The passage to the Einstein frame requires
\begin{equation}
    \vartheta>0,
    \qquad
    \psi>0.
    \label{regularity-GH-Jordan}
\end{equation}
The first condition ensures a positive conformal factor, while the second
guarantees that the field redefinition associated with the Palatini scalar is
real.

Applying the conformal transformation
\begin{equation}
    \tilde g_{\mu\nu}=\vartheta g_{\mu\nu}
\end{equation}
and introducing the fields
\begin{equation}
    \chi=
    \sqrt{\frac{3}{2\kappa^2}}\ln\vartheta,
    \qquad
    \sigma=
    \frac{\sqrt6}{\kappa}\sqrt{\psi},
    \label{campos-TH}
\end{equation}
the action assumes the Einstein-frame form
\begin{equation}
S=
\int d^4x\sqrt{-\tilde g}
\left[
\frac{\tilde R}{2\kappa^2}
-\frac{1}{2}\tilde g^{\mu\nu}
\nabla_\mu\chi\nabla_\nu\chi
-\frac{1}{2}
e^{-\sqrt{\frac{2}{3}}\kappa\chi}
\tilde g^{\mu\nu}\nabla_\mu\sigma\nabla_\nu\sigma
-\tilde W_f(\chi,\sigma)
\right],
\label{Rosa-action}
\end{equation}
where
\begin{equation}
\tilde W_f(\chi,\sigma)
=
\frac{1}{2\kappa^2}
e^{-2\sqrt{\frac{2}{3}}\kappa\chi}
V\left(
e^{\sqrt{\frac{2}{3}}\kappa\chi}
+\frac{\kappa^2\sigma^2}{6},
\frac{\kappa^2\sigma^2}{6}
\right).
\label{potencial-hibirda.scalar-tensor}
\end{equation}

The form of Eq.~\eqref{Rosa-action} is the point at which the comparison with the \(\Phi(R,\phi,X)\) sector becomes direct. On the regular branch, the Einstein-frame scalar sector of GH gravity takes the same kinetic form as that obtained for the
\(\Phi(R,\phi,X)\) theories discussed in
Sec.~\ref{sec:EF-Representation of GST}. In the variables \((\chi,\sigma)\),
the corresponding scalar field-space line element is
\begin{equation}
    d\ell^2  =d\chi^2+e^{-\sqrt{\frac{2}{3}}\kappa\chi}d\sigma^2.
    \label{eq:field-space-GH}
\end{equation}
Thus, the distinction between the two formulations is not in the kinetic sector
but in the Einstein-frame potential, which is \(\tilde W_f(\chi,\sigma)\) for
the GH theory. This observation provides the basis for the
reconstruction maps developed in the next section.

Combining Eqs.~\eqref{field-hybrid} and \eqref{regularity-GH-Jordan}, the
regularity domain of the hybrid representation can be expressed directly in
terms of \(f(R,\mathcal R)\) as
\begin{equation}
    f_{,R}+f_{,\mathcal R}>0,
    \qquad
    f_{,\mathcal R}<0,
    \qquad
    f_{,RR}f_{,\mathcal R\mathcal R}
    -
    f_{,R\mathcal R}^{\,2}
    \neq0 .
    \label{eq:regularity-GH-summary}
\end{equation}
These conditions will be used below to distinguish the regular (singular-envelope)
Clairaut branch from the degenerate linear branch.

\section{Reconstruction Framework}
\label{sec:Reconstruction-Framework}

The preceding sections show that the class of \(\Phi(R,\phi,X)\) theories
linear in \(X\) and GH gravity lead, on their respective regular branches, to
Einstein-frame scalar sectors with the same functional form of the kinetic
term. In both cases the scalar fields span a two-dimensional field space with
line element
\begin{equation}
    d\ell^2=d\chi^2+
    e^{-\sqrt{\frac{2}{3}}\kappa\chi}d\sigma^2 .
\end{equation}
Consequently, once the regular branches have been specified, the model-dependent information relevant for the correspondence is encoded in the Einstein-frame scalar potential. The resulting branch-level reconstruction scheme is summarized in Fig.~\ref{fig:correspondence}. This section formulates
the matching as two local reconstruction theorems, one in each direction.

\begin{figure}[h]
    \centering
    \includegraphics[width=0.8\linewidth]{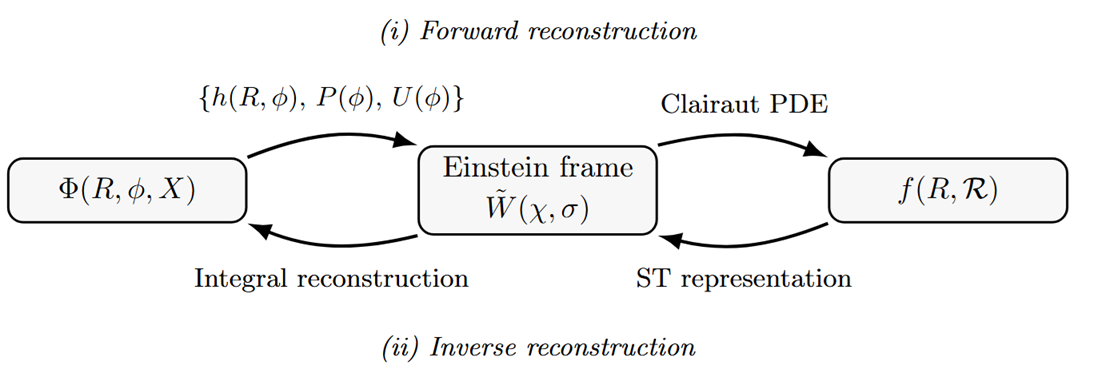}
  \caption{Schematic reconstruction between the regular Einstein-frame sectors of \(\Phi(R,\phi,X)\) theories linear in \(X\) and GH gravity. Path~\textit{(i)} denotes the forward Clairaut reconstruction, while path~\textit{(ii)} denotes the inverse reconstruction.}
    \label{fig:correspondence}
\end{figure}

It is important to stress that the correspondence is not a global equivalence
between arbitrary Jordan-frame actions. Rather, it is a local equivalence
between regular scalar--tensor branches that induce the same Einstein-frame
kinetic form and scalar potential. In the \(\Phi(R,\phi,X)\) sector, the
regularity conditions are
\begin{equation}
    h_{,R}>0,
    \qquad
    h_{,RR}\neq0,
    \qquad
    P(\phi)>0,
    \label{eq:Phi-regularity-conditions}
\end{equation}
whereas in the generalized hybrid sector they are
\begin{equation}
    f_{,R}+f_{,\mathcal R}>0,
    \qquad
    f_{,\mathcal R}<0,
    \qquad
    \Delta_f\equiv
    f_{,RR}f_{,\mathcal R\mathcal R}
    -
    f_{,R\mathcal R}^{\,2}
    \neq0 .
    \label{eq:GH-regularity-conditions}
\end{equation}
The first two conditions in Eq.~\eqref{eq:GH-regularity-conditions} ensure that
the Einstein-frame conformal factor is positive and that the field
\(\sigma\) is real, while the last condition ensures the local invertibility of
the Legendre map defining the scalar--tensor representation.

\subsection{\texorpdfstring{From $\Phi(R,\phi,X)$ to $f(R,\mathcal R)$}{From Phi to f}}
\label{subsec:Phi-to-f-theorem}

We first consider the reconstruction of GH representatives from a regular
\(\Phi(R,\phi,X)\) theory. The result is local: it applies on open domains where the scalar--tensor representations of both theories are
well defined and the corresponding field redefinitions are invertible.

\begin{theorem}[Forward reconstruction]
\label{thm:Phi-to-f}
Let a theory of the form
\begin{equation}
    \Phi(R,\phi,X)=h(R,\phi)+P(\phi)X-U(\phi)
\end{equation}
satisfy, on an open domain, the regularity conditions 
\begin{equation}
    h_{,R}>0,\qquad h_{,RR}\neq0,\qquad P(\phi)>0 .
    \label{eq:Phi-regularity-forward}
\end{equation}
Let \(\tilde W(\chi,\sigma)\) denote the corresponding Einstein-frame
potential. 

A GH function \(f(R,\mathcal R)\) defines, on a regular branch, the same vacuum Einstein-frame two-scalar sector if it satisfies

\begin{equation}
f(R,\mathcal R)
=
R f_{,R}
+\mathcal R f_{,\mathcal R}
-
2\kappa^2
\left(f_{,R}+f_{,\mathcal R}\right)^2
\tilde W
\left(
\sqrt{\frac{3}{2\kappa^2}}
\ln\left(f_{,R}+f_{,\mathcal R}\right),
\frac{\sqrt6}{\kappa}
\sqrt{-f_{,\mathcal R}}
\right)
\label{eq:clairaut-forward}
\end{equation}
on a domain where
\begin{equation}
    f_{,R}+f_{,\mathcal R}>0,
    \qquad
    f_{,\mathcal R}<0,
    \qquad
    f_{,RR}f_{,\mathcal R\mathcal R}
    -
    f_{,R\mathcal R}^{\,2}
    \neq0 .
    \label{eq:GH-regularity-forward}
\end{equation}
\end{theorem}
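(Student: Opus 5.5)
The plan is to match the two Einstein-frame actions directly. By Sec.~\ref{sec:EF-Representation of GST}, the regular branch \eqref{eq:Phi-regularity-forward} of the \(\Phi(R,\phi,X)\) theory gives the action \eqref{eq:action-biscalar} with potential \(\tilde W(\chi,\sigma)\). By Sec.~\ref{sec:GH-gravity}, any \(f(R,\mathcal R)\) on the branch \eqref{eq:GH-regularity-forward} gives \eqref{Rosa-action} with potential \(\tilde W_f(\chi,\sigma)\). The kinetic sectors coincide identically in the coordinates \((\chi,\sigma)\), since both field-space metrics are \eqref{eq:field-space-Phi}. So the two vacuum Einstein-frame two-scalar sectors agree exactly when \(\tilde W_f=\tilde W\) as functions on the common field-space domain. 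The proof therefore reduces to showing that \eqref{eq:clairaut-forward} is equivalent to \(\tilde W_f(\chi,\sigma)=\tilde W(\chi,\sigma)\), once \(\chi,\sigma\) are expressed through \(f\).

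First, I would express the GH fields in terms of derivatives of \(f\). On the branch \(\Delta_f\neq0\) the Legendre map is locally invertible, and \eqref{determinant-equations} enforces \(\alpha=R\) and \(\beta=\mathcal R\). Then \eqref{field-hybrid} gives \(\varphi=f_{,R}\) and \(\psi=-f_{,\mathcal R}\), so \(\vartheta=f_{,R}+f_{,\mathcal R}\). Substituting into \eqref{campos-TH} yields
\[
\chi=\sqrt{\tfrac{3}{2\kappa^2}}\ln(f_{,R}+f_{,\mathcal R}),\qquad \sigma=\tfrac{\sqrt6}{\kappa}\sqrt{-f_{,\mathcal R}} .
\]
These are real exactly on the domain \eqref{eq:GH-regularity-forward}. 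Next, I would invert these relations: \(e^{\sqrt{2/3}\kappa\chi}=\vartheta\) and \(\kappa^2\sigma^2/6=\psi\). This confirms that the arguments of \(V\) in \eqref{potencial-hibirda.scalar-tensor} are \((\vartheta+\psi,\psi)=(\varphi,\psi)\), so that
\[
\tilde W_f=\tfrac{1}{2\kappa^2}(f_{,R}+f_{,\mathcal R})^{-2}V .
\]
Finally, \eqref{potencial-V-hibrida} evaluated on the branch gives \(V=-f+Rf_{,R}+\mathcal R f_{,\mathcal R}\). Imposing \(\tilde W_f=\tilde W\) and solving for \(f\) produces \eqref{eq:clairaut-forward} directly.

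The converse direction is the same chain read backwards. Suppose \(f\) satisfies \eqref{eq:clairaut-forward} on a domain where \eqref{eq:GH-regularity-forward} holds. Then the scalar--tensor representation of Sec.~\ref{sec:GH-gravity} exists, and its potential \(\tilde W_f\), evaluated at the image point \((\chi,\sigma)\), equals \(\tilde W\).

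The main obstacle is domain and coordinate bookkeeping rather than algebra. One has to check that the map \((R,\mathcal R)\mapsto(\chi,\sigma)\) is a local diffeomorphism onto an open set on which \(\tilde W\) is defined. Only then does pointwise equality along the image give equality of the potentials as functions, and hence identical Einstein-frame actions. I would handle this with the Jacobian. We have \(\partial(\vartheta,\psi)/\partial(R,\mathcal R)\) with determinant \(-\Delta_f\). The map \((\vartheta,\psi)\mapsto(\chi,\sigma)\) has nonzero Jacobian for \(\vartheta,\psi>0\). So \(\Delta_f\neq0\) is exactly what is needed, and the inverse function theorem gives the local statement. I would also note explicitly that the \(\sigma\) coordinate on the \(\Phi\) side is defined only up to the shift constant \(C\) in \eqref{sigma-P(phi)}. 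Matching presupposes a fixed choice, so the resulting \(f\) is understood relative to that choice.
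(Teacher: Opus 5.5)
Your proposal is correct and follows the paper's own argument: the identical field-space metrics reduce the matching to $\tilde W_f=\tilde W$, and inserting $\varphi=f_{,R}$, $\psi=-f_{,\mathcal R}$ and the Legendre potential $V=-f+Rf_{,R}+\mathcal R f_{,\mathcal R}$ yields the Clairaut equation, with the converse obtained by reading the chain backwards. Your Jacobian check, $\det\,\partial(\vartheta,\psi)/\partial(R,\mathcal R)=-\Delta_f$, usefully makes explicit the local-diffeomorphism step that the paper only asserts through the invertibility of the Legendre map.
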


\begin{proof}
On the regular \(\Phi(R,\phi,X)\) branch, the theory can be written in
the Einstein frame as
\begin{equation}
S_E=
\int d^4x\sqrt{-\tilde g}
\left[
\frac{\tilde R}{2\kappa^2}
-\frac{1}{2}\tilde g^{\mu\nu}
\nabla_\mu\chi\nabla_\nu\chi
-\frac{1}{2}e^{-\sqrt{\frac{2}{3}}\kappa\chi}
\tilde g^{\mu\nu}\nabla_\mu\sigma\nabla_\nu\sigma
-\tilde W(\chi,\sigma)
\right].
\end{equation}
On the regular branch of GH gravity, the
Einstein-frame variables are
\begin{equation}
    \chi=
    \sqrt{\frac{3}{2\kappa^2}}\ln{(\varphi-\psi)},
    \qquad
    \sigma=\frac{\sqrt6}{\kappa}\sqrt{\psi},
\end{equation}
with
\begin{equation}\label{scalar-fields-R}
    \varphi=f_{,R},
    \qquad
    \psi=-f_{,\mathcal R}.
\end{equation}
Since the kinetic terms then have the same functional form, the reconstruction
condition reduces to the equality of the Einstein-frame potentials:
\begin{equation}\label{potencial-W-V}
      \frac{1}{2\kappa^2(\varphi-\psi)^2}
   V(\varphi,\psi)=\tilde{W}\!\left(
   \sqrt{\tfrac{3}{2\kappa^2}}\ln(\varphi-\psi),
   \tfrac{\sqrt{6}}{\kappa}\sqrt{\psi}
   \right).
\end{equation}
The scalar potential \(V(\varphi,\psi)\) is related to the original hybrid
function \(f(R,\mathcal R)\) by the Legendre transform
\eqref{potencial-V-hibrida}. Substituting the relations
\eqref{scalar-fields-R} into Eq.~\eqref{potencial-W-V} and using
Eq.~\eqref{potencial-V-hibrida}, one obtains Eq.~\eqref{eq:clairaut-forward}.

Conversely, if a function \(f(R,\mathcal R)\) satisfies Eq.~\eqref{eq:clairaut-forward} on a domain where
Eq.~\eqref{eq:GH-regularity-forward} holds, then the Legendre map is
locally invertible and the hybrid scalar--tensor representation is
regular. The Einstein-frame kinetic sector has the same field-space
metric as the \(\Phi(R,\phi,X)\) branch, and the potentials coincide by
construction. Therefore the two vacuum Einstein-frame scalar sectors are
locally identical.
\end{proof}

\subsection{Degenerate and singular Clairaut branches}
\label{subsec:Clairaut-branches}

Equation~\eqref{eq:clairaut-forward} is a first-order Clairaut-type
partial differential equation. Introducing
\begin{equation}
    p=f_{,R},
    \qquad
    q=f_{,\mathcal R},
\end{equation}
it can be written as
\begin{equation}
    f=Rp+\mathcal R q+G(p,q),
\end{equation}
where
\begin{equation}
G(p,q)
=
-2\kappa^2(p+q)^2
\tilde W
\left[
\sqrt{\frac{3}{2\kappa^2}}\ln(p+q),
\frac{\sqrt6}{\kappa}\sqrt{-q}
\right].
\end{equation}

\begin{proposition}[Degeneracy of the complete Clairaut integral]
\label{prop:linear-clairaut-degenerate}
The complete linear integral of Eq.~\eqref{eq:clairaut-forward} does
not define a regular two-scalar generalized hybrid branch.
\end{proposition}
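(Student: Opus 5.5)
The plan is to write down the complete integral of the Clairaut-type equation and check it against the regularity conditions \eqref{eq:GH-regularity-forward}. For a Clairaut equation $f=Rp+\mathcal R q+G(p,q)$ with $p=f_{,R}$, $q=f_{,\mathcal R}$, the complete integral is obtained by taking $p$ and $q$ to be constants, $p=c_1$, $q=c_2$. This gives the two-parameter family
\begin{equation}
f(R,\mathcal R)=c_1R+c_2\mathcal R+G(c_1,c_2),
\end{equation}
with $c_1+c_2>0$ and $c_2<0$, so that $G$ is real and defined. I will first check, by direct differentiation, that this family solves Eq.~\eqref{eq:clairaut-forward}. Indeed $f_{,R}=c_1$ and $f_{,\mathcal R}=c_2$. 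Substituting back, the right-hand side of Eq.~\eqref{eq:clairaut-forward} becomes $c_1R+c_2\mathcal R+G(c_1,c_2)$, which is identically $f$.

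The key step is to compute the Hessian. Since $f$ is affine in $(R,\mathcal R)$, all second derivatives vanish: $f_{,RR}=f_{,\mathcal R\mathcal R}=f_{,R\mathcal R}=0$. Hence $\Delta_f=0$ identically, violating the third condition in Eq.~\eqref{eq:GH-regularity-forward}, equivalently condition \eqref{constrain-hessiano}.

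I will then translate this into the failure of the scalar--tensor construction of Sec.~\ref{sec:GH-gravity}. First, Eqs.~\eqref{determinant-equations} become vacuous, so they no longer enforce $\alpha=R$, $\beta=\mathcal R$. Second, the fields $\varphi=f_{,\alpha}=c_1$ and $\psi=-f_{,\beta}=-c_2$ are constants rather than dynamical fields. As a result, the map $(\alpha,\beta)\mapsto(\varphi,\psi)$ cannot be inverted, and the potential $V$ in Eq.~\eqref{potencial-V-hibrida} is not defined as a function on an open set of $(\varphi,\psi)$. The Einstein-frame variables $(\chi,\sigma)$ in Eq.~\eqref{campos-TH} are then frozen at a single point of field space. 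So the action is Einstein--Hilbert with a rescaled coupling and a cosmological constant, not a two-scalar theory, and no regular GH branch arises.

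I do not expect a real obstacle. The only point needing care is to make precise what \emph{complete linear integral} means, namely the family with $p$ and $q$ constant. I will also remark that $\Delta_f\neq0$ requires the non-constant gradient map supplied by the singular envelope. That envelope is obtained from $R=-G_{,p}$, $\mathcal R=-G_{,q}$, and it is the branch the subsequent discussion relies on.
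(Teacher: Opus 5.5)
Your proposal is correct and follows the paper's proof. Fixing $f_{,R}=a$ and $f_{,\mathcal R}=b$ as constants gives an affine $f$, so $f_{,RR}=f_{,\mathcal R\mathcal R}=f_{,R\mathcal R}=0$, hence $\Delta_f=0$ and the Legendre map behind the hybrid scalar--tensor representation is not locally invertible. Your extra remarks (the frozen fields $\varphi,\psi$, and the collapse to GR with a rescaled coupling and a cosmological constant) go beyond the paper but are consistent with it; one precision for your closing remark is that $\Delta_f\neq0$ requires the gradient map $(R,\mathcal R)\mapsto(f_{,R},f_{,\mathcal R})$ to be locally invertible, not merely non-constant.
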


\begin{proof}
The complete Clairaut integral is obtained by assigning constant values
to the first derivatives,
\begin{equation}
    p=a,
    \qquad
    q=b,
\end{equation}
which gives
\begin{equation}
f^{(l)}(R,\mathcal R)
=
aR+b\mathcal R
+
G(a,b).
\label{eq:linear-clairaut-integral}
\end{equation}
Since \(a\) and \(b\) are constants, one has
\begin{equation}
    f_{,RR}=0,
    \qquad
    f_{,\mathcal R\mathcal R}=0,
    \qquad
    f_{,R\mathcal R}=0 .
\end{equation}
Therefore,
\begin{equation}
    \Delta_f
    =
    f_{,RR}f_{,\mathcal R\mathcal R}
    -
    f_{,R\mathcal R}^{\,2}
    =
    0 .
\end{equation}
The Legendre map is not locally invertible, and the corresponding
hybrid scalar--tensor representation cannot describe a regular
two-scalar branch.
\end{proof}

The remaining Clairaut candidate is the singular envelope of the family
\eqref{eq:linear-clairaut-integral}. It is determined, when it exists,
by
\begin{equation}
    \frac{\partial f^{(l)}}{\partial a}=0,
    \qquad
    \frac{\partial f^{(l)}}{\partial b}=0 .
    \label{envelope-conditions}
\end{equation}
Thus, any regular GH model reconstructed from
a \(\Phi(R,\phi,X)\) model must come from the singular envelope and must
still be checked against the full set of conditions
\eqref{eq:GH-regularity-forward}.

\subsection{\texorpdfstring{Inverse reconstruction: from $f(R,\mathcal R)$ to $\Phi(R,\phi,X)$}{Inverse reconstruction}}
\label{subsec:f-to-Phi-theorem}

We now consider the inverse problem. Starting from a regular GH theory, or equivalently from its Einstein-frame potential
\(\tilde W_f(\chi,\sigma)\), we reconstruct a family of
\(\Phi(R,\phi,X)\) theories that share the same Einstein-frame scalar sector.
This reconstruction is local and depends on the choice of scalar-field
parametrization in the \(\Phi(R,\phi,X)\) representation.

\begin{theorem}[Inverse reconstruction]
\label{thm:f-to-Phi}
Let \(f(R,\mathcal R)\) be a generalized hybrid metric--Palatini theory
satisfying the conditions
\begin{equation}
    f_{,R}+f_{,\mathcal R}>0,
    \qquad
    f_{,\mathcal R}<0,
    \qquad
    f_{,RR}f_{,\mathcal R\mathcal R}
    -
    f_{,R\mathcal R}^{\,2}
    \neq0 .
    \label{eq:GH-regularity-inverse}
\end{equation}
Let \(\tilde W_f(\chi,\sigma)\) be its Einstein-frame potential. Choose a
locally invertible scalar parametrization
\begin{equation}
    \sigma=\mathcal J(\phi),
    \qquad
    \mathcal J'(\phi)=\frac{\sqrt{P(\phi)}}{\sqrt2\kappa},
    \qquad
    P(\phi)>0 .
    \label{eq:J-inverse}
\end{equation}
Define
\begin{equation}
\xi(\gamma,\phi)
=
\partial_\gamma
\left[
2\kappa^2\gamma^2
\tilde W_f
\left(
\sqrt{\frac{3}{2\kappa^2}}\ln\gamma,
\mathcal J(\phi)
\right)
\right].
\label{eq:xi-inverse}
\end{equation}
If Eq.~\eqref{eq:xi-inverse} can be locally inverted as
\begin{equation}
    \gamma=\gamma(\xi,\phi),
    \label{eq:gamma-inverse}
\end{equation}
on a domain where   
\begin{equation}
    \gamma>0,
    \qquad
    \partial_\xi\gamma\neq0,
\end{equation}
then
\begin{equation}
\Phi(R,\phi,X)=\int^R \gamma(\bar R,\phi)\,d\bar R
+P(\phi)X\label{eq:Phi-inverse}
\end{equation}
defines a \(\Phi(R,\phi,X)\) representative whose regular
Einstein-frame scalar sector coincides with the original hybrid branch.
Different choices of \(\mathcal J(\phi)\), or equivalently of
\(P(\phi)\), generate an equivalence class of such representatives.
\end{theorem}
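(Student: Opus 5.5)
The plan is to run the Einstein-frame construction of Sec.~\ref{sec:EF-Representation of GST} backwards. We fix the target potential $\tilde W_f$ and build $h(R,\phi)$ so that the potential \eqref{potential-f-r-phi-X} coincides with it. The natural choice is $U(\phi)=0$, since any $U$ can be absorbed into $h$. Define the Jordan-frame potential
\begin{equation*}
\mathcal U(\gamma,\phi)\equiv 2\kappa^2\gamma^2\,\tilde W_f\!\left(\sqrt{\tfrac{3}{2\kappa^2}}\ln\gamma,\mathcal J(\phi)\right).
\end{equation*}
By \eqref{potential-UJ} and \eqref{potential-f-r-phi-X}, matching the potentials is equivalent to
\begin{equation*}
\gamma\,\xi-h(\xi,\phi)=\mathcal U(\gamma,\phi),
\qquad \gamma=h_{,\xi}(\xi,\phi).
\end{equation*}
This says that $\mathcal U(\cdot,\phi)$ is the Legendre transform of $h(\cdot,\phi)$ in its first argument, with $\phi$ a spectator parameter.

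First, I will invert the Legendre transform. Differentiating $\gamma\xi-h=\mathcal U$ with respect to $\gamma$ at fixed $\phi$, and using $h_{,\xi}=\gamma$, gives $\xi=\partial_\gamma\mathcal U$, which is exactly \eqref{eq:xi-inverse}. Under the hypothesis that this relation inverts locally to $\gamma(\xi,\phi)$ with $\partial_\xi\gamma\neq0$, I set $h(R,\phi)=\int^R\gamma(\bar R,\phi)\,d\bar R$. Then $h_{,R}=\gamma>0$ and $h_{,RR}=\partial_\xi\gamma\neq0$. So the first two conditions of \eqref{eq:Phi-regularity-forward} hold, and $P>0$ holds by the choice \eqref{eq:J-inverse}.

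Next comes the step I expect to be the main obstacle: the $R$-antiderivative carries a $\phi$-dependent integration function $c(\phi)$. I must show this can be fixed so that $\gamma\xi-h$ equals $\mathcal U$ itself, not just $\mathcal U$ up to a function of $\phi$. Define $D(\gamma,\phi)=\gamma\,\xi(\gamma,\phi)-h(\xi(\gamma,\phi),\phi)-\mathcal U$. Then
\begin{equation*}
\partial_\gamma D=\xi+\gamma\,\partial_\gamma\xi-h_{,\xi}\,\partial_\gamma\xi-\partial_\gamma\mathcal U=0,
\end{equation*}
so $D=D(\phi)$ depends only on $\phi$. The leftover $D(\phi)$ is removed by the lower limit of the integral, or equivalently by a $U(\phi)$ term, which is a pure function of $\phi$. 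I will state explicitly that the $\int^R$ in \eqref{eq:Phi-inverse} is understood with this normalization.

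With $h$ so fixed, I will run Sec.~\ref{sec:EF-Representation of GST} forward. Because $h_{,\xi\xi}\neq0$, the relation \eqref{Theta} is locally invertible, and by construction its inverse is the given $\xi(\gamma,\phi)$. The field redefinitions \eqref{PHI} and \eqref{sigma-P(phi)} give the field-space metric \eqref{eq:field-space-Phi}, which coincides with \eqref{eq:field-space-GH}. The potential \eqref{potential-f-r-phi-X} reduces to $(2\kappa^2\gamma^2)^{-1}\mathcal U=\tilde W_f(\chi,\sigma)$. Hence the two Einstein-frame actions \eqref{eq:action-biscalar} and \eqref{Rosa-action} are identical on the overlapping regular domain, where \eqref{eq:GH-regularity-inverse} guarantees that the hybrid side is well defined.

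Finally, for the non-uniqueness claim, I note that the only input beyond $\tilde W_f$ is the choice of $\mathcal J$, that is, of $P(\phi)$. Each admissible choice yields a $\Phi$ with the same Einstein-frame sector. Different choices are related by reparametrizations $\phi\mapsto\bar\phi(\phi)$, under which $P X$ transforms covariantly, and this generates the stated equivalence class.
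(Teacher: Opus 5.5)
Your proposal is correct and follows essentially the paper's route: you match $\gamma\xi-h+U$ to $2\kappa^2\gamma^2\tilde W_f$, differentiate in $\gamma$ using $h_{,\xi}=\gamma$ to obtain Eq.~\eqref{eq:xi-inverse}, invert, and integrate to recover $h$. Your check that $D(\gamma,\phi)$ is $\gamma$-independent, so that the $\phi$-dependent lower limit of $\int^R$ must be fixed to make $D$ vanish, usefully sharpens the paper's step of simply absorbing the integration function $Q(\phi)$ into $U(\phi)$, because the differentiated equation alone fixes $-h+U$ only up to a function of $\phi$.
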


\begin{proof}
Let the reconstructed \(\Phi(R,\phi,X)\) theory be written as
\begin{equation}
    \Phi(R,\phi,X)=h(R,\phi)+P(\phi)X-U(\phi).
\end{equation}
Following the Einstein-frame construction of
Sec.~\ref{sec:EF-Representation of GST}, introduce the auxiliary field \(\xi\)
and define
\begin{equation}
    \gamma=h_{,\xi}(\xi,\phi).
\end{equation}
The Einstein-frame potential generated by this \(\Phi(R,\phi,X)\) theory can be
written in terms of \((\gamma,\phi)\) as
\begin{equation}
\tilde W(\gamma,\phi)
=
\frac{1}{2\kappa^2}\gamma^{-2}
\left[
\gamma\,\xi(\gamma,\phi)
-
h(\xi(\gamma,\phi),\phi)
+
U(\phi)
\right].
\label{eq:W-Phi-gamma-proof}
\end{equation}
Requiring this potential to coincide with the prescribed hybrid potential gives
\begin{equation}
\gamma\,\xi(\gamma,\phi)
-
h(\xi(\gamma,\phi),\phi)
+
U(\phi)
=
2\kappa^2\gamma^2
\tilde W_f
\left(
\sqrt{\frac{3}{2\kappa^2}}\ln\gamma,
\mathcal J(\phi)
\right).
\label{eq:matching-inverse-proof}
\end{equation}
Differentiating Eq.~\eqref{eq:matching-inverse-proof} with respect to
\(\gamma\), and using \(h_{,\xi}=\gamma\), gives
Eq.~\eqref{eq:xi-inverse}. If this relation can be inverted locally, then
\begin{equation}
    h_{,\xi}(\xi,\phi)=\gamma(\xi,\phi).
\end{equation}
Therefore,
\begin{equation}
    h(\xi,\phi)=\int^\xi \gamma(\bar\xi,\phi)\,d\bar\xi+Q(\phi),
    \label{eq:h-inverse-proof}
\end{equation}
where \(Q(\phi)\) is an arbitrary integration function. Since the
Einstein-frame potential depends on \(h\) and \(U\) only through the combination
\(-h+U\), the function \(Q(\phi)\) can be absorbed into a redefinition of
\(U(\phi)\). Choosing \(Q(\phi)=U(\phi)\), and replacing \(\xi\) by \(R\), one
obtains Eq.~\eqref{eq:Phi-inverse}.
\end{proof}

Thus, the freedom in \(P(\phi)\), or equivalently in the choice of \(\sigma=\mathcal J(\phi)\), labels different Jordan-frame representatives
of the same Einstein-frame dynamics. The inverse map therefore assigns to each regular hybrid branch an equivalence class of \(\Phi(R,\phi,X)\) theories, rather than a unique representative.

As a consequence, the reconstruction induces a local map between vacuum
solution spaces. A vacuum solution of the \(\Phi(R,\phi,X)\) theory or of the
reconstructed GH theory can be mapped to the common Einstein-frame two-scalar system. Applying the corresponding inverse conformal transformation and scalar-field redefinitions then gives a vacuum solution in the other formulation. The equivalence holds only within the regular sectors
connected by the reconstruction.

\section{Applications}
\label{sec:Applications}

We now illustrate the reconstruction framework with explicit examples. The aim is not only to display particular solutions of the reconstruction equations, but also to show how the main structural features of the correspondence appear
in concrete models.

The examples are chosen to illustrate different aspects of the reconstruction. The quadratic \(\Phi(R,\phi,X)\) models show explicitly how the singular Clairaut envelope generates non-degenerate hybrid representatives and how the
regularity domain restricts the admissible branches. The scalaron--Higgs example demonstrates that a known two-field inflationary Einstein-frame sector can be embedded in GH gravity on a regular branch. The inverse examples show how prescribed hybrid potentials used in cosmological applications are translated into families of
\(\Phi(R,\phi,X)\) actions, with the hybrid parameters fixing the powers of the curvature and scalar-field dependence.

\subsection{\texorpdfstring{Quadratic \(\Phi(R,\phi,X)\) models}{Quadratic Phi models}}
\label{subsec:Quadratic-Phi-models}

We begin with the subclass of \(\Phi(R,\phi,X)\) theories whose dependence on
the metric Ricci scalar is quadratic,
\begin{equation}
\Phi(R,\phi,X)
=
\mathcal F(\phi)R
+
\frac{1}{6M^2(\phi)}R^2
+
k_0X
-
U(\phi),
\label{quadratic-R}
\end{equation}
where \(k_0>0\). This corresponds to
\begin{equation}
    P(\phi)=k_0,
    \qquad
    h(R,\phi)
    =
    \mathcal F(\phi)R
    +
    \frac{1}{6M^2(\phi)}R^2 .
\end{equation}
The functions \(\mathcal F(\phi)\) and \(M^2(\phi)\) generalize, respectively,
the effective non-minimal coupling to \(R\) and the scalaron mass scale. For
constant \(\mathcal F\), constant \(M\), and fixed \(\phi\), the curvature
sector reduces to the Starobinsky form. When \(\phi\) is dynamical,
Eq.~\eqref{quadratic-R} describes a two-field deformation of the scalaron
sector, of the type used in Starobinsky-like and scalaron--Higgs constructions
\cite{Sho2015,Dhimiter2020,Anirudh2020}.

The model belongs to the regular \(\Phi(R,\phi,X)\) branch provided
\begin{equation}
    h_{,R}
    =
    \mathcal F(\phi)
    +
    \frac{R}{3M^2(\phi)}
    >0,
    \qquad
    h_{,RR}
    =
    \frac{1}{3M^2(\phi)}
    \neq0,
    \qquad
    k_0>0 .
    \label{eq:quadratic-Phi-regularity}
\end{equation}

The Einstein-frame scalar fields are
\begin{equation}
    \sigma=\sqrt{\frac{k_0}{2\kappa^2}}\,\phi,
    \label{integral-X}
\end{equation}
and
\begin{equation}
\chi = \sqrt{\frac{3}{2\kappa^2}}
    \ln\left[\mathcal F(\phi)+
    \frac{R}{3M^2(\phi)}\right].
    \label{campo-chi-X}
\end{equation}
The corresponding Einstein-frame potential is
\begin{equation}
\tilde W(\chi,\sigma)=
\frac{1}{2\kappa^2}e^{-2\sqrt{\frac23}\kappa\chi}
\left[\frac{3}{2}M^2\!\Big(\sqrt{\tfrac{2\kappa^2}{k_0}}\sigma\Big)\left(e^{\sqrt{\frac23}\kappa\chi}
-\mathcal F\!\Big(\sqrt{\tfrac{2\kappa^2}{k_0}}\sigma\Big)\right)^2+
U\!\Big(\sqrt{\tfrac{2\kappa^2}{k_0}}\sigma\Big)
\right].
\label{potential-quadratic-R}
\end{equation}

According to Theorem~\ref{thm:Phi-to-f}, any regular GH representative
associated with this Einstein-frame scalar sector must satisfy the Clairaut
equation \eqref{eq:clairaut-forward}, with \(\tilde W(\chi,\sigma)\) given by
Eq.~\eqref{potential-quadratic-R}. We shall consider two choices of the functions \(\mathcal F(\phi)\), \(M^2(\phi)\), and \(U(\phi)\). The first is motivated by wormhole configurations and is used to identify the corresponding gravitational--scalar hybrid branch. The second realizes the scalaron--Higgs two-field sector within GH gravity.

\subsubsection{Wormhole configuration}
\label{subsec:Wormhole-configuration}

As a first application of Theorem~\ref{thm:Phi-to-f}, we consider a quadratic
model motivated by wormhole configurations previously studied in
\(\Phi(R,\phi,X)\) gravity \cite{Malik-2021}. Since these configurations involve matter sources,
the purpose of the present example is not to map the complete
matter-supported geometry. Instead, we use the corresponding scalar sector as
input for the reconstruction and determine the regular GH model associated with it. 

We consider the
quadratic model
\begin{equation}
    M^2(\phi)=\frac{1}{6\mu},
    \qquad
    \mathcal F(\phi)=1,
    \qquad
    U(\phi)=U_0\phi^m,
    \qquad
    k_0=1 .
    \label{eq:symmetric-model-choice}
\end{equation}
This choice gives
\begin{equation}\label{explicit-Phi}
    \Phi(R,\phi,X)=R+\mu R^2+X-U_0\phi^m .
\end{equation}
The model is the simplest nontrivial member of the quadratic class \eqref{quadratic-R}. The \(R^2\) correction introduces the scalaron degree of
freedom, while the canonical scalar sector with a power-law potential preserves the structure required for the two-field Einstein-frame representation. It is
also sufficiently simple to allow an explicit Clairaut reconstruction. Thus, it provides a useful benchmark for testing
whether a known \(\Phi(R,\phi,X)\) gravitational-scalar sector admitting nontrivial configurations can be assigned a regular GH model. 

Its Einstein-frame potential \eqref{potential-quadratic-R} is given by
\begin{equation}
\tilde W(\chi,\sigma)
=
\frac{1}{2\kappa^2}
e^{-2\sqrt{\frac23}\kappa\chi}
\left[
\frac{1}{4\mu}
\left(
e^{\sqrt{\frac23}\kappa\chi}-1
\right)^2
+
U_0
\left(
\sqrt{2\kappa^2}\,\sigma
\right)^m
\right].
\label{potential-4}
\end{equation}

Substitution of Eq.~\eqref{potential-4} into the forward reconstruction equation
\eqref{eq:clairaut-forward} gives
\begin{equation}
 \frac{1}{4\mu}\left(f_{,R}+f_{,\mathcal R}-1\right)^2+U_0\left(2\sqrt{3}\sqrt{-f_{,\mathcal R}}\right)^m- Rf_{,R}-\mathcal R f_{,\mathcal R}
 +f(R,\mathcal R)=0 .
 \label{eq:clairaut-symmetric}
\end{equation}
The complete linear integral is
\begin{equation}
f^{(l)}(R,\mathcal R)=aR+b\mathcal R
-\frac{1}{4\mu}(a+b-1)^2-U_0
\left(2\sqrt{3}\sqrt{-b}\right)^m ,
\label{general-sol-1}
\end{equation}
where \(a\) and \(b\) are constants. The singular branch follows from the
envelope conditions, Eq.~\eqref{envelope-conditions},
\begin{equation}
R-\frac{1}{2\mu}(a+b-1)=0,
\qquad
\mathcal R-\frac{1}{2\mu}(a+b-1)
+
\frac{mU_0}{2}\,(2\sqrt{3})^m(-b)^{\frac{m}{2}-1}
=0 .
\label{eq:envelope-symmetric}
\end{equation}
Solving these equations for \(a\) and \(b\), and substituting back into
Eq.~\eqref{general-sol-1}, gives the hybrid representative
\begin{equation}
f(R,\mathcal R)
=
R+\mu R^2
+
A_m(R-\mathcal R)^{\frac{m}{m-2}},
\label{hybrid-wormhole}
\end{equation}
where
\begin{equation}
 A_m=
 \frac{m-2}
 {m\left(\frac{mU_0}{2} (2\sqrt{3})^m\right)^{\frac{2}{m-2}}}.
 \label{eq:Am-definition}
\end{equation}
The Hessian determinant is
\begin{equation}
\Delta_f
=\frac{4 \mu m A_m}{(m-2)^2}(R-\mathcal R)^{\tfrac{4-m}{m-2}}.
\label{eq:Hessian-symmetric}
\end{equation}
Therefore, this reconstructed model defines a regular hybrid branch on the
domain
\begin{equation}
\frac{4 \mu m A_m }{(m-2)^2}(R-\mathcal R)^{\tfrac{4-m}{m-2}}\neq0 .
    \label{eq:regular-domain-symmetric}
\end{equation}
In particular, the branch excludes \(R=\mathcal R\), \(m=0\), and \(m=2\), and
requires a real choice of the power \((R-\mathcal R)^{\tfrac{4-m}{m-2}}\). For non-integer
\(\tfrac{4-m}{m-2}\), this means that a real branch of \((R-\mathcal R)^{\tfrac{4-m}{m-2}}\) must be fixed consistently with the positivity condition \(\psi>0\).

The scalar profile used below is taken from matter-supported wormhole
solutions in \(\Phi(R,\phi,X)\) gravity. In the present reconstruction it is
used only to identify the corresponding gravitational--scalar branch. A full
mapping of the complete wormhole solution would require specifying the matter
action and its coupling in both Jordan-frame formulations. In the wormhole solutions considered in \cite{Malik-2021}, the scalar field can be written as
\begin{equation}
    \phi(r)=a_0\left(\frac{b_0}{r}\right)^{\sigma_1},
    \label{eq:phi-profile}
\end{equation}
where \(a_0\), \(b_0\), and \(\sigma_1\) are constants. Since \(k_0=1\), the
Einstein-frame scalar \(\sigma\) is
\begin{equation}
    \sigma(r)
    =
    \frac{1}{\sqrt{2}\kappa}
    a_0\left(\frac{b_0}{r}\right)^{\sigma_1},
    \label{eq:sigma-profile-symmetric}
\end{equation}
where we have set the integration constant in \(\mathcal J(\phi)\) to zero.
The hybrid scalar \(\psi\) is then fixed by
\begin{equation}
    \psi(r)
    =
    \frac{\kappa^2\sigma^2(r)}{6}.
    \label{eq:psi-profile-symmetric}
\end{equation}
Using $\psi=-f_{,\mathcal R}$, the Palatini scalar curvature along the
corresponding matched scalar branch is reconstructed as
\begin{equation}
    \mathcal R(r)
    =
    R(r)
    -
    \left[
    \frac{\kappa^2(m-2)\sigma^2(r)}
    {6mA_m}
    \right]^{\frac{m-2}{2}} .
    \label{eq:Rcal-profile-symmetric}
\end{equation}
Thus, once the Einstein-frame scalar profile is fixed, the Palatini curvature
scalar is not arbitrary but is determined algebraically by the reconstructed
hybrid branch.

A comment is in order concerning the metric Ricci scalar \(R\) appearing in the
reconstruction. The correspondence developed here is formulated at the level of
the Einstein-frame scalar sector. Therefore, the metric Ricci scalar in the
original \(\Phi(R,\phi,X)\) representative and the metric Ricci scalar in the
reconstructed hybrid representative should, in principle, be distinguished
before imposing any additional identification of the Jordan-frame geometries.

Let \(R_\Phi\) denote the metric Ricci scalar in the \(\Phi(R,\phi,X)\)
representative and \(R_{\rm H}\) the metric Ricci scalar in the hybrid
representative. Matching the conformal scalar gives the general condition
\begin{equation}
    h_{,R}(R_\Phi,\phi)
    =
    f_{,R}(R_{\rm H},\mathcal R)
    +
    f_{,\mathcal R}(R_{\rm H},\mathcal R).
    \label{eq:general-conformal-matching}
\end{equation}
For the reconstructed hybrid branch \eqref{hybrid-wormhole}, this condition
becomes
\begin{equation}
    1+2\mu R_\Phi
    =
    1+2\mu R_{\rm H},
\end{equation}
and hence
\begin{equation}
    R_\Phi=R_{\rm H}.
    \label{eq:Rphi-RH-relation}
\end{equation}
Thus, in this example, the matching of the Einstein-frame conformal scalar is
consistent with identifying the same Levi--Civita Ricci scalar in both
Jordan-frame representatives. The remaining Palatini scalar \(\mathcal R\) is
then fixed algebraically by Eq.~\eqref{eq:Rcal-profile-symmetric}.

This example shows that the Clairaut reconstruction does more than generate a
formal function \(f(R,\mathcal R)\). The reconstructed function fixes the relation \(\psi=-f_{,\mathcal R}\), so that, once the matched Einstein-frame scalar profile is specified, the Palatini scalar \(\mathcal R\) is determined algebraically on the corresponding hybrid branch.

\subsubsection{Hybrid realization of the scalaron--Higgs sector}
\label{subsec:scalaron-higgs-hybrid}

As a second application of the forward reconstruction theorem, we consider the scalaron--Higgs sector \cite{Anirudh2020}. This example is physically motivated
because it combines the \(R^2\) correction of Starobinsky-type inflation with the non-minimal scalar-curvature coupling characteristic of Higgs-inflation
models. In the Einstein frame, these ingredients generate an interacting two-field scalar sector with a nontrivial valley structure, making the model a useful test of the reconstruction map.

In the present notation, this model is obtained from the quadratic class
\eqref{quadratic-R} by choosing
\begin{equation}
    M^2(\phi)=M_0^2,
    \qquad
    \mathcal F(\phi)=m_0^2+\alpha_0\phi^2,
    \qquad
    U(\phi)=\frac{\lambda}{2}(\phi^2-\nu^2)^2,
    \qquad
    k_0=2 .
    \label{eq:Scalaron-Higgs-choice}
\end{equation}
The regularity conditions in the \(\Phi(R,\phi,X)\) representative require
\begin{equation}
    m_0^2+\alpha_0\phi^2+\frac{R}{3M_0^2}>0,
    \qquad
    M_0^2\neq0,
    \qquad
    k_0=2>0 .
    \label{eq:Scalaron-Higgs-Phi-regularity}
\end{equation}
The Einstein-frame fields are
\begin{equation}
    \sigma=\frac{\phi}{\kappa},
    \qquad
    \chi=
    \sqrt{\frac{3}{2\kappa^2}}
    \ln
    \left(
    m_0^2+\alpha_0\phi^2+\frac{R}{3M_0^2}
    \right).
    \label{eq:Scalaron-Higgs-fields}
\end{equation}
The corresponding Einstein-frame potential is
\begin{equation}
\tilde W(\chi,\sigma)
=
\frac{1}{4\kappa^2}
e^{-2\sqrt{\frac23}\kappa\chi}
\left[
3M_0^2
\left(
m_0^2+\alpha_0\kappa^2\sigma^2
-
e^{\sqrt{\frac23}\kappa\chi}
\right)^2
+
\lambda
\left(
\kappa^2\sigma^2-\nu^2
\right)^2
\right].
\label{potential-quadratic-R-a}
\end{equation}

Using Eq.~\eqref{eq:clairaut-forward}, or equivalently matching the hybrid
potential \(V(\varphi,\psi)\), one obtains
\begin{equation}
V(\varphi,\psi)
=
\frac{3}{2}M_0^2
\left[
m_0^2-\varphi+(1+6\alpha_0)\psi
\right]^2
+
\frac{\lambda}{2}
\left(
6\psi-\nu^2
\right)^2 .
\label{eq:V-Scalaron-Higgs}
\end{equation}
The corresponding Clairaut equation is
\begin{equation}
\frac{3}{2}M_0^2
\left(m_0^2-f_{,R}-(1+6\alpha_0)f_{,\mathcal R}
\right)^2+\frac{\lambda}{2}
\left(6f_{,\mathcal R}+\nu^2\right)^2
-Rf_{,R}-\mathcal R f_{,\mathcal R}
+f(R,\mathcal R)=0 .
\label{clairaut-U2}
\end{equation}
Introducing
\begin{equation}
    \eta=\frac{1+6\alpha_0}{6},
    \label{eq:eta-definition}
\end{equation}
the singular Clairaut envelope yields
\begin{equation}
f(R,\mathcal R)=
\left(\eta\nu^2+m_0^2\right)R
+\left(\frac{\eta^2}{2\lambda}
+\frac{1}{6M_0^2}\right)R^2
-\frac{\eta}{6\lambda}R\mathcal R
-\frac{\nu^2}{6}\mathcal R
+\frac{1}{72\lambda}\mathcal R^2,
\label{Scalaron-Higgs-Hibrida}
\end{equation}
with \(\lambda\neq0\) and \(M_0^2\neq0\).

The regularity of the reconstructed hybrid branch can be checked explicitly.
From Eq.~\eqref{Scalaron-Higgs-Hibrida}, one finds
\begin{equation}
\Delta_f=f_{,RR}f_{,\mathcal R\mathcal R}
-f_{,R\mathcal R}^{\,2}=\frac{1}{108\lambda M_0^2}, \qquad     \lambda M_0^2\neq0 .
\label{eq:Hessian-Scalaron-Higgs}
\end{equation}
Thus, for the usual scalaron--Higgs parameter range
\(\lambda>0\) and \(M_0^2>0\), the Hessian determinant is positive.

The two scalar variables of the hybrid representation are
\begin{equation}
\psi=-f_{,\mathcal R}=
\frac{\eta}{6\lambda}R
+\frac{\nu^2}{6}-\frac{1}{36\lambda}\mathcal R,
\label{eq:psi-Scalaron-Higgs}
\end{equation}
and
\begin{equation}
\vartheta=f_{,R}+f_{,\mathcal R}
=m_0^2+\frac{R}{3 M_0^2}+\frac{(6 \eta -1) }{36\lambda }\left(6 \lambda  \nu ^2+6 \eta  R-\mathcal R\right)
\label{eq:vartheta-Scalaron-Higgs}
\end{equation}
Accordingly, the regular Einstein-frame branch is defined by
\begin{equation}
    \psi>0, \qquad \vartheta>0, \qquad
    \Delta_f\neq0 .
    \label{eq:regular-domain-Scalaron-Higgs}
\end{equation}
Equivalently,
\begin{equation}
    \frac{\eta}{6\lambda}R
+\frac{\nu^2}{6}-\frac{1}{36\lambda}\mathcal R>0,
    \qquad
  f_{,R}+f_{,\mathcal R}
=m_0^2+\frac{R}{3 M_0^2}+\frac{(6 \eta -1) }{36\lambda }\left(6 \lambda  \nu ^2+6 \eta  R-\mathcal R\right) >0 .
    \label{eq:regular-domain-Scalaron-Higgs-explicit}
\end{equation}

Equation~\eqref{Scalaron-Higgs-Hibrida} is therefore an explicit GH
representative of the scalaron--Higgs two-field Einstein-frame sector. Since
the reconstructed branch has the same kinetic form and the same
Einstein-frame potential \eqref{potential-quadratic-R-a}, it realizes the
same regular scalar sector within GH gravity.
The purpose of this example is structural: it shows that a known two-field inflationary sector can be embedded in a regular generalized hybrid branch. The inequalities \eqref{eq:regular-domain-Scalaron-Higgs}--\eqref{eq:regular-domain-Scalaron-Higgs-explicit} specify the domain in which this realization defines a genuine two-scalar
hybrid representation.

\subsection{Inverse reconstruction from prescribed hybrid potentials}
\label{subsec:inverse-prescribed-potentials}

We now illustrate the inverse reconstruction theorem \ref{thm:f-to-Phi}. In this direction, the input is a potential associated with a regular generalized hybrid branch.  The output is not a unique \(\Phi(R,\phi,X)\) theory, but a
family of representatives parametrized by the scalar-field \(\sigma=\mathcal J(\phi)\).

We consider the class of hybrid Einstein-frame potentials
\begin{equation}
\tilde W_f(\chi,\sigma)
=
\frac{1}{2\kappa^2}
\left[
w_0+w_1(\kappa\sigma)^{2p}
\right]
e^{-q\sqrt{\frac{2}{3}}\kappa\chi},
\label{generalized-exp-potential}
\end{equation}
where \(w_0\), \(w_1\), \(p\), and \(q\) are constants. This class includes potentials that have appeared in dynamical-systems analyses and cosmological
applications of GH gravity \cite{tamanini,PauloHybrid2020,Rosa2017}.

Using the inverse reconstruction theorem, we write
\begin{equation}
    \gamma=e^{\sqrt{\frac{2}{3}}\kappa\chi},
    \qquad
    \sigma=\mathcal J(\phi),
\end{equation}
so that Eq.~\eqref{generalized-exp-potential} becomes
\begin{equation}
\tilde W_f(\gamma,\phi)
=
\frac{1}{2\kappa^2}
\left[
w_0+w_1\bigl(\kappa\mathcal J(\phi)\bigr)^{2p}
\right]\gamma^{-q}.
\label{W-calE-1}
\end{equation}
Substitution into Eq.~\eqref{eq:xi-inverse} gives
\begin{equation}
\xi
=
(2-q)
\left[
w_0+w_1\bigl(\kappa\mathcal J(\phi)\bigr)^{2p}
\right]
\gamma^{1-q}.
\label{eq:xi-general-potential}
\end{equation}
For \(q\neq1\) and \(q\neq2\), this relation can be locally inverted as
\begin{equation}
\gamma(\xi,\phi)
=
\left[
\frac{\xi}
{(2-q)
\left(
w_0+w_1\bigl(\kappa\mathcal J(\phi)\bigr)^{2p}
\right)}
\right]^{\frac{1}{1-q}}.
\label{eq:gamma-general-potential}
\end{equation}
Therefore, the reconstructed family of \(\Phi(R,\phi,X)\) theories is
\begin{equation}
\Phi(R,\phi,X)
=
\mathcal C_q
R^{\frac{2-q}{1-q}}
\left[
w_0+w_1\bigl(\kappa\mathcal J(\phi)\bigr)^{2p}
\right]^{-\frac{1}{1-q}}
+
P(\phi)X,
\label{Phi-general-final}
\end{equation}
where
\begin{equation}
    \mathcal C_q
    =
    (1-q)(2-q)^{-\frac{2-q}{1-q}},
    \qquad
    q\neq1,\qquad q\neq2.
    \label{eq:Cq-definition}
\end{equation}
The function \(P(\phi)\) remains arbitrary up to the local invertibility
condition \(P(\phi)>0\). Thus, a prescribed hybrid Einstein-frame potential
determines an equivalence class of \(\Phi(R,\phi,X)\) representatives rather
than a unique Jordan-frame action.

\subsubsection{Power-law exponential potential}
\label{subsec:power-law-exponential-potential}

As a first particular case of Eq.~\eqref{generalized-exp-potential}, we take
\begin{equation}
    w_0=0,
    \qquad
    2p=2q=s.
\end{equation}
The Einstein-frame potential then becomes
\begin{equation}
\tilde W_f(\chi,\sigma)
=
\frac{w_1}{2\kappa^2}
(\kappa\sigma)^s
e^{-s\sqrt{\frac{1}{6}}\kappa\chi}.
\label{W-w=0-p=2q}
\end{equation}
This coincides with the functional form of the interacting two-field potential
used in the dynamical-systems analysis of GH
gravity in \cite{tamanini}, with the identification \(s=\lambda\).

The inverse reconstruction formula \eqref{Phi-general-final} gives
\begin{equation}
\Phi(R,\phi,X)
=
\mathcal C_{s/2}
R^{\frac{4-s}{2-s}}
\left[
w_1
\bigl(\kappa\mathcal J(\phi)\bigr)^s
\right]^{-\frac{2}{2-s}}
+
P(\phi)X,
\qquad
s\neq2,4 .
\label{Phi-power-law-general}
\end{equation}
For a constant kinetic coupling \(P(\phi)=a_1>0\), and choosing the integration
constant in \(\mathcal J\) to vanish, one has
\begin{equation}
    \mathcal J(\phi)
    =
    \sqrt{\frac{a_1}{2\kappa^2}}\phi .
\end{equation}
In this case, Eq.~\eqref{Phi-power-law-general} reduces to
\begin{equation}
\Phi(R,\phi,X)
=
a_0
\phi^{\frac{2s}{s-2}}
R^{\frac{4-s}{2-s}}
+
a_1X,
\qquad
s\neq2,4 ,
\label{Phi-power-law-constant-P}
\end{equation}
where the constant \(a_0\) absorbs the numerical factors depending on
\(w_1\), \(a_1\), \(\kappa\), and \(s\).

This example shows how a parameter appearing in the hybrid Einstein-frame
dynamical system is translated into the powers of a reconstructed
\(\Phi(R,\phi,X)\) action. Writing
\begin{equation}
\Phi(R,\phi,X)= a_0\phi^{\alpha_\phi}R^{\alpha_R}
+ a_1X , \label{eq:Phi-power-law-alpha}
\end{equation}
the inverse reconstruction fixes
\begin{equation}
    \alpha_R
    =
    \frac{4-s}{2-s},
    \qquad
    \alpha_\phi
    =
    \frac{2s}{s-2}.
    \label{eq:alphaR-alphaphi}
\end{equation}
Thus, the parameter \(s\), which controls the fixed-point structure of the
vacuum two-scalar Einstein-frame system in the hybrid formulation, determines
both curvature and scalar-field powers in the reconstructed
\(\Phi(R,\phi,X)\) model. These powers are not independent, but
satisfy
\begin{equation}
    \alpha_\phi=4-2\alpha_R .
    \label{eq:alpha-relation}
\end{equation}

In the hybrid model \cite{tamanini}, the corresponding scaling points
have
\begin{equation}
    w_{\rm eff}
    =
    \frac{s^2}{18}-1 .
    \label{eq:weff-s}
\end{equation}
Therefore, for \(s>0\), the accelerating sector \(w_{\rm eff}<-1/3\)
corresponds to
\begin{equation}
    0<s<2\sqrt3 .
    \label{eq:accelerating-s-range}
\end{equation}
Through Eq.~\eqref{eq:alphaR-alphaphi}, this interval selects the corresponding
subclasses of \(\Phi(R,\phi,X)\) models. Since the inverse reconstruction is
singular at \(s=2\), the accelerating interval splits into two disjoint
branches:
\begin{equation}
    0<s<2:
    \qquad
    \alpha_R>2,
    \qquad
    \alpha_\phi<0,
    \label{eq:accelerating-branch-1}
\end{equation}
and
\begin{equation}
    2<s<2\sqrt3:
    \qquad
    \alpha_R<\frac{1-\sqrt3}{2},
    \qquad
    \alpha_\phi>3+\sqrt3 .
    \label{eq:accelerating-branch-2}
\end{equation}
At the level of the regular vacuum Einstein-frame branch, the image of the
hybrid accelerating sector is represented in the reconstructed
\(\Phi(R,\phi,X)\) family by two disconnected regions in the exponent space
\((\alpha_R,\alpha_\phi)\).

Selected regimes of the hybrid dynamical system correspond to definite powers
in the reconstructed \(\Phi(R,\phi,X)\) action. For example, the matter-like
condition \(w_{\rm eff}=0\) gives \(s=3\sqrt2\), while the radiation-like
condition \(w_{\rm eff}=1/3\) gives \(s=2\sqrt6\). The corresponding exponents
are summarized in Table~\ref{tab:hybrid-to-Phi-exponents}.

\begin{table}[h]
\centering
\begin{tabular}{c|c|c|c}
Hybrid regime & \(s\) & \(\alpha_R\) & \(\alpha_\phi\) \\
\hline
Accelerating scaling sector
&
\(0<s<2\sqrt3,\; s\neq2\)
&
\(\tfrac{4-s}{2-s}\)
&
\(\tfrac{2s}{s-2}\)
\\[2mm]
Effective dust-like scalar regime, \(w_{\rm eff}=0\)
&
\(3\sqrt2\)
&
\(\tfrac{5-3\sqrt2}{7}\)
&
\(\tfrac{6(3+\sqrt2)}{7}\)
\\[2mm]
Effective radiation-like scalar regime, \(w_{\rm eff}=1/3\)
&
\(2\sqrt6\)
&
\(\tfrac{4-\sqrt6}{5}\)
&
\(\tfrac{2(6+\sqrt6)}{5}\)\\ \hline
\end{tabular}
\caption{Effective fixed-point regimes of the
vacuum hybrid Einstein-frame two-scalar system into the powers of the
reconstructed \(\Phi(R,\phi,X)\) model \eqref{eq:Phi-power-law-alpha}.  }
\label{tab:hybrid-to-Phi-exponents}
\end{table}

For completeness, the same Einstein-frame potential also admits a generalized
hybrid representative through the forward reconstruction equation. The
corresponding singular Clairaut branch is
\begin{equation}
f(R,\mathcal R)
=
A_s
R^{\frac{4-s}{2}}
(R-\mathcal R)^{\frac{s}{2}},
\label{hybrid-f-Exp}
\end{equation}
with
\begin{equation}
A_s=\frac{4}{w_1(4-s)^2}
\left(\frac{4-s}{6s}\right)^{s/2}.
\label{eq:As-definition}
\end{equation}
For non-integer exponents, Eq.~\eqref{hybrid-f-Exp} is understood on a fixed
real branch.

The regular real branch of the reconstructed hybrid representative is defined by
\begin{equation}
    w_1\neq0,
    \qquad
    s\neq0,4,
    \qquad
    R\neq0,
    \qquad
    R\neq\mathcal R,
\end{equation}
together with
\begin{equation}
A_s(4-s)
R^{\frac{2-s}{2}}
(R-\mathcal R)^{\frac{s}{2}}>0,
\qquad
sA_s
R^{\frac{4-s}{2}}
(R-\mathcal R)^{\frac{s-2}{2}}>0 .
\end{equation}

The parameter \(s\) plays a dual role, fixing the slope of the exponential
interaction and the fixed-point structure in the hybrid Einstein-frame
description, while determining the powers \((\alpha_R,\alpha_\phi)\) in the
reconstructed \(\Phi(R,\phi,X)\) theory. In particular, the hybrid accelerating range is mapped to $0<s<2\sqrt3$ with $s\neq2$, where the exclusion \(s=2\) comes from the inverse \(\Phi\)-reconstruction. Thus the same Einstein-frame accelerating sector is represented in the \(\Phi(R,\phi,X)\) family by the two exponent branches identified above.

\subsubsection{Quadratic exponential potential}
\label{subsec:quadratic-exponential-potential}

As a second example of the inverse reconstruction, we consider the case
\(p=1\) in Eq.~\eqref{generalized-exp-potential}. The Einstein-frame potential
is then
\begin{equation}
\tilde W_f(\chi,\sigma)
=
\frac{1}{2\kappa^2}
\left[
w_0+w_1(\kappa\sigma)^2
\right]
e^{-q\sqrt{\frac{2}{3}}\kappa\chi}.
\label{quadratic-exp-potential}
\end{equation}
This form is motivated by dark-sector applications of GH gravity, where the two Einstein-frame scalars are used to
describe effective dark-energy and dark-matter components \cite{PauloHybrid2020}. Indeed, with the identifications
\begin{equation}
    q=\frac{\lambda}{2},
    \qquad
    w_0=2\kappa^2 V_a,
    \qquad
    w_1=\frac{2V_a}{\xi_a^2},
\end{equation}
Eq.~\eqref{quadratic-exp-potential} reproduces the scalar potential used in \cite{PauloHybrid2020}. In that model, the quadratic dependence on the
second scalar allows an oscillating-field interpretation as an effective
pressureless dark-matter component, while the exponential dependence on the
first scalar drives late-time dark-energy-like behavior. 

In the present reconstruction, we use only the two-scalar Einstein-frame potential as input. Therefore, the map below identifies the family of
\(\Phi(R,\phi,X)\) representatives associated with this scalar sector, but does not reconstruct the matter action or the matter-coupling prescription of
the cosmological model.

The same scalar-sector analysis gives the asymptotic relation
\begin{equation}
    w_{\rm eff}\to -1+\frac{2q^2}{9},
\end{equation}
so that the late-time accelerating sector corresponds to \(q\leq\sqrt3\).
The values \(q\sim1\) are particularly relevant for a standard-like sequence of
radiation, matter, and dark-energy domination, although the exact value
\(q=1\) is a singular point of the inverse reconstruction below.

Using Eq.~\eqref{Phi-general-final}, the associated family of
\(\Phi(R,\phi,X)\) theories is
\begin{equation}
\Phi(R,\phi,X)
=
\mathcal C_q
R^{\frac{2-q}{1-q}}
\left[
w_0+w_1
\bigl(\kappa\mathcal J(\phi)\bigr)^2
\right]^{-\frac{1}{1-q}}
+
P(\phi)X,
\qquad
q\neq1,2,
\label{Phi-Exp-quad}
\end{equation}
where
\begin{equation}
    \mathcal J'(\phi)=\frac{\sqrt{P(\phi)}}{\sqrt2\kappa},
    \qquad
    P(\phi)>0 .
\end{equation}
The reconstruction is understood on domains where the powers appearing in
Eq.~\eqref{Phi-Exp-quad} are real and the expression inside the square brackets
does not vanish. Thus, as in the general inverse construction, the potential
\eqref{quadratic-exp-potential} determines an equivalence class of
\(\Phi(R,\phi,X)\) representatives rather than a unique Jordan-frame action.

The dark-sector parameter \(q\) is therefore translated into the curvature
power
\begin{equation}
    \alpha_R=\frac{2-q}{1-q}
\end{equation}
of the reconstructed \(\Phi(R,\phi,X)\) representative. In particular, the
accelerating range \(q\leq\sqrt3\), excluding the singular point \(q=1\),
selects the corresponding subclasses within the family
\eqref{Phi-Exp-quad}. This should not be interpreted as a new dark-sector
dynamics, since the Einstein-frame potential is the same; rather, it identifies
the \(\Phi(R,\phi,X)\) representatives associated with the GH potential used in
the dark-sector construction.

For completeness, the same Einstein-frame potential also admits a GH representative through the forward reconstruction equation. The corresponding singular Clairaut branch is
\begin{equation}
f(R,\mathcal R)=R\left(\frac{R-\mathcal R}{6w_1}
\right)^{\frac{1}{2-q}}+\frac{w_0}{6w_1}
(\mathcal R-R),\qquad q\neq1,2,\qquad
w_1\neq0 .\label{hybrid-f-Exp-quad}
\end{equation}
For non-integer exponents, Eq.~\eqref{hybrid-f-Exp-quad} is understood on a
fixed real branch. The Hessian determinant of this representative is
\begin{equation}
\Delta_f=-
\frac{1}{(2-q)^2}
(6w_1)^{-\frac{2}{2-q}}
(R-\mathcal R)^{\frac{2(q-1)}{2-q}} .
\label{eq:Hessian-quad-exp}
\end{equation}
Hence the Legendre map is non-degenerate provided
\begin{equation}
    w_1\neq0,
    \qquad
    q\neq2,
    \qquad
    R\neq\mathcal R,
\end{equation}
together with the appropriate real branch choice for the power of \(R-\mathcal R\). In the present inverse reconstruction, the value \(q=1\) is
excluded by Eq.~\eqref{Phi-Exp-quad}, although it is not a degeneracy of the
Hessian in Eq.~\eqref{eq:Hessian-quad-exp}. The regular GH domain is further
restricted by the positivity conditions \(\psi>0\) and \(\vartheta>0\). 

In the limiting case \(w_1=0\), the potential reduces to a pure exponential in \(\chi\),
\begin{equation}
\tilde W_f(\chi,\sigma)
=
\frac{w_0}{2\kappa^2}
e^{-q\sqrt{\frac{2}{3}}\kappa\chi},
\end{equation}
and loses its dependence on \(\sigma\). This is the type of exponential Einstein-frame potential studied in the scalar--tensor formulation of GH gravity \cite{tamanini}. However, within the present reconstruction it does not define a regular two-scalar \(f(R,\mathcal R)\) branch. The reason is that the second scalar direction is
not fixed by the potential, so the corresponding Legendre map loses rank. This is consistent with the non-degeneracy requirement on the Hessian of \(f(R,\mathcal R)\) in the scalar--tensor equivalence. Accordingly, the case \(w_1=0\) may be studied as a degenerate scalar--tensor sector, but it is not
contained in the regular two-field \(f(R,\mathcal R)\) branch described by Eq.~\eqref{hybrid-f-Exp-quad}.

\section{Conclusions}
\label{sec:Conclusions}

We have formulated a local reconstruction between
\(\Phi(R,\phi,X)\) theories with linear dependence on \(X\) and GH gravity. The construction is performed in vacuum and at the level of regular scalar--tensor representation. On such branches, both theories admit the same Einstein-frame two-field kinetic geometry, so that the correspondence reduces to matching the scalar potentials together with the corresponding Legendre, positivity, and invertibility conditions.

In the forward direction, starting from a regular
\(\Phi(R,\phi,X)\) model, we derived a Clairaut-type equation for the corresponding GH function \(f(R,\mathcal R)\). The complete linear Clairaut integral was shown to be degenerate from the hybrid
scalar--tensor point of view, since the associated Hessian determinant vanishes. Therefore, it does not define a regular two-scalar hybrid representation. Non-degenerate hybrid representatives, when they exist, must
instead arise from the singular envelope and must still satisfy the regularity conditions of GH gravity. This provides a criterion for distinguishing admissible scalar--tensor branches from merely formal solutions of the reconstruction equation.

The inverse reconstruction gives a complementary result. A given regular
hybrid Einstein-frame potential determines a family, rather than a unique
representative, of \(\Phi(R,\phi,X)\) theories. The freedom is parametrized by
the choice of scalar-field redefinition, or equivalently by the kinetic
coupling of the original scalar field. Thus, different Jordan-frame
parametrizations may represent the same Einstein-frame two-scalar dynamics.

The examples make these results explicit. For quadratic
\(\Phi(R,\phi,X)\) models, the singular Clairaut envelope gives
non-degenerate generalized hybrid representatives, and the admissible domains
are fixed by the corresponding positivity and Hessian conditions. In the
wormhole-motivated case, the scalar profile was used only to identify the
associated gravitational--scalar branch; the Palatini scalar is then fixed
algebraically by the reconstructed hybrid function. This does not constitute a
map of the full matter-supported wormhole solution, which would require a
separate prescription for the matter sector.

The scalaron--Higgs example shows that a known two-field Einstein-frame sector can be represented by a regular generalized hybrid branch. In this case, the reconstructed hybrid function has a nonvanishing Hessian in the appropriate parameter domain, while the remaining inequalities select the region where the Einstein-frame conformal factor and the Palatini scalar are positive. The
example should therefore be understood as a structural embedding of the scalaron--Higgs scalar sector into GH gravity.

The inverse examples show how hybrid Einstein-frame potentials used in
cosmological applications are translated into classes of \(\Phi(R,\phi,X)\) representatives. In the power-law exponential case, the parameter controlling the hybrid fixed-point structure determines the powers of the curvature and scalar-field dependence in the reconstructed action. In the quadratic exponential case, the same procedure identifies the scalar-sector representatives associated with dark-sector hybrid potentials.

The resulting correspondence is therefore a local equivalence between regular vacuum Einstein-frame scalar sectors, rather than a global equivalence between arbitrary Jordan-frame actions. Within this domain, it provides a constructive map between \(\Phi(R,\phi,X)\) and \(f(R,\mathcal R)\) descriptions of the same two-scalar dynamics.

\printbibliography[heading=bibintoc]

@article{Ramirez2026,
title = {A connection between gravitational scalar-tensor theories and generalized hybrid theories},
journal = {Physics Letters B},
volume = {877},
pages = {140462},
year = {2026},
issn = {0370-2693},
doi = {https://doi.org/10.1016/j.physletb.2026.140462},
url = {https://www.sciencedirect.com/science/article/pii/S0370269326003151},
author = {Jonathan Ramírez and Santiago Esteban {Perez Bergliaffa}},
}

@article{PauloHybrid2020,
    author = "S{\'a}, Paulo M.",
    title = "{Unified description of dark energy and dark matter within the generalized hybrid metric-Palatini theory of gravity}",
    eprint = "2002.09446",
    archivePrefix = "arXiv",
    primaryClass = "gr-qc",
    doi = "10.3390/universe6060078",
    journal = "Universe",
    volume = "6",
    number = "6",
    pages = "78",
    year = "2020"
}

@article{Malik-2021,
title = {Existence of static wormhole solutions using f(R,$\phi$,X) theory of gravity},
journal = {New Astronomy},
volume = {89},
pages = {101632},
year = {2021},
issn = {1384-1076},
doi = {https://doi.org/10.1016/j.newast.2021.101632},
url = {https://www.sciencedirect.com/science/article/pii/S1384107621000646},
author = {Adnan Malik and Ayesha Nafees}
}

@article{Nojiri2010,
    author = "Nojiri, Shin'ichi and Odintsov, Sergei D.",
    title = "{Unified cosmic history in modified gravity: from F(R) theory to Lorentz non-invariant models}",
    eprint = "1011.0544",
    archivePrefix = "arXiv",
    primaryClass = "gr-qc",
    doi = "10.1016/j.physrep.2011.04.001",
    journal = "Phys. Rept.",
    volume = "505",
    pages = "59--144",
    year = "2011"
}

@article{Bahamonde2018,
    author = "Bahamonde, Sebastian and Bamba, Kazuharu and Camci, Ugur",
    title = "{New Exact Spherically Symmetric Solutions in $f(R,\phi,X)$ gravity by Noether's symmetry approach}",
    eprint = "1808.04328",
    archivePrefix = "arXiv",
    primaryClass = "gr-qc",
    doi = "10.1088/1475-7516/2019/02/016",
    journal = "JCAP",
    volume = "02",
    pages = "016",
    year = "2019"
}

@article{Garriga:1999vw,
    author = "Garriga, Jaume and Mukhanov, Viatcheslav F.",
    title = "{Perturbations in k-inflation}",
    eprint = "hep-th/9904176",
    archivePrefix = "arXiv",
    reportNumber = "UAB-FT-466",
    doi = "10.1016/S0370-2693(99)00602-4",
    journal = "Phys. Lett. B",
    volume = "458",
    pages = "219--225",
    year = "1999"
}

@article{Shinji2003,
    author = "Mukohyama, Shinji and Randall, Lisa",
    title = "{A Dynamical approach to the cosmological constant}",
    eprint = "hep-th/0306108",
    archivePrefix = "arXiv",
    reportNumber = "HUTP-03-A041",
    doi = "10.1103/PhysRevLett.92.211302",
    journal = "Phys. Rev. Lett.",
    volume = "92",
    pages = "211302",
    year = "2004"
}

@article{Sotiriou:2011dz,
    author = "Sotiriou, Thomas P. and Faraoni, Valerio",
    title = "{Black holes in scalar-tensor gravity}",
    eprint = "1109.6324",
    archivePrefix = "arXiv",
    primaryClass = "gr-qc",
    doi = "10.1103/PhysRevLett.108.081103",
    journal = "Phys. Rev. Lett.",
    volume = "108",
    pages = "081103",
    year = "2012"
}

@article{Amendola1993,
    author = "Amendola, Luca",
    title = "{Cosmology with nonminimal derivative couplings}",
    eprint = "gr-qc/9302010",
    archivePrefix = "arXiv",
    reportNumber = "PRINT-93-0202 (ROME)",
    doi = "10.1016/0370-2693(93)90685-B",
    journal = "Phys. Lett. B",
    volume = "301",
    pages = "175--182",
    year = "1993"
}

@article{Nilok2009,
  title = {Unified model of $k$-inflation, dark matter, and dark energy},
  author = {Bose, Nilok and Majumdar, A. S.},
  journal = {Phys. Rev. D},
  volume = {80},
  issue = {10},
  pages = {103508},
  numpages = {7},
  year = {2009},
  month = {11},
  publisher = {American Physical Society},
  doi = {10.1103/PhysRevD.80.103508},
  url ={https://link.aps.org/doi/10.1103/PhysRevD.80.103508}
}

@article{Nojiri2019,
    author = "Nojiri, S. and Odintsov, S. D. and Oikonomou, V. K.",
    title = "{$k$-essence $f(R)$ gravity inflation}",
    eprint = "1902.03669",
    archivePrefix = "arXiv",
    primaryClass = "gr-qc",
    doi = "10.1016/j.nuclphysb.2019.02.008",
    journal = "Nucl. Phys. B",
    volume = "941",
    pages = "11--27",
    year = "2019"
}

@article{Bahamonde2015,
    author = {Bahamonde, Sebastian and B{\"o}hmer, Christian G. and Lobo, Francisco S. N. and S{\'a}ez-G{\'o}mez, Diego},
    title = "{Generalized $f(R,\phi,X)$ Gravity and the Late-Time Cosmic Acceleration}",
    eprint = "1506.07728",
    archivePrefix = "arXiv",
    primaryClass = "gr-qc",
    doi = "10.3390/universe1020186",
    journal = "Universe",
    volume = "1",
    number = "2",
    pages = "186--198",
    year = "2015"
}

@article{Hwang2002,
    author = "Hwang, Jai-chan and Noh, Hyerim",
    title = "{Cosmological perturbations in a generalized gravity including tachyonic condensation}",
    eprint = "hep-th/0206100",
    archivePrefix = "arXiv",
    doi = "10.1103/PhysRevD.66.084009",
    journal = "Phys. Rev. D",
    volume = "66",
    pages = "084009",
    year = "2002"
}

@article{Sho2015,
    author = "Kaneda, Sho and Ketov, Sergei V.",
    title = "{Starobinsky-like two-field inflation}",
    eprint = "1510.03524",
    archivePrefix = "arXiv",
    primaryClass = "hep-th",
    reportNumber = "IPMU15-0177",
    doi = "10.1140/epjc/s10052-016-3888-0",
    journal = "Eur. Phys. J. C",
    volume = "76",
    number = "1",
    pages = "26",
    year = "2016"
}

@article{Malik2022,
  title = {A study of cylindrically symmetric solutions in $f(R,\phi,X)$ theory of gravity},
  author = {Malik, A. and Nafees, A. and Ali, A. and others},
  journal = {Eur. Phys. J. C},
  volume = {82},
  pages = {166},
  year = {2022},
  month = {02},
  publisher = {Springer},
  doi = {10.1140/epjc/s10052-022-10135-0},
  url = {https://doi.org/10.1140/epjc/s10052-022-10135-0}
}

@article{Anirudh2020,
    author = "Gundhi, Anirudh and Steinwachs, Christian F.",
    title = "{Scalaron-Higgs inflation}",
    eprint = "1810.10546",
    archivePrefix = "arXiv",
    primaryClass = "hep-th",
    reportNumber = "FR-PHENO-2018-013",
    doi = "10.1016/j.nuclphysb.2020.114989",
    journal = "Nucl. Phys. B",
    volume = "954",
    pages = "114989",
    year = "2020"
}

@article{Dhimiter2020,
  title = {A simple $F(R,\phi)$ deformation of Starobinsky inflationary model},
  author = {Canko, D. D. and Gialamas, I. D. and Kodaxis, G. P.},
  journal = {Eur. Phys. J. C},
  volume = {80},
  pages = {458},
  year = {2020},
  month = {05},
  publisher = {Springer},
  doi = {10.1140/epjc/s10052-020-8025-4},
  url = {https://doi.org/10.1140/epjc/s10052-020-8025-4}
}

@article{Maeda1989,
  title = {Towards the Einstein-Hilbert action via conformal transformation},
  author = {Maeda, Kei-ichi},
  journal = {Phys. Rev. D},
  volume = {39},
  issue = {10},
  pages = {3159--3162},
  numpages = {0},
  year = {1989},
  month = {05},
  publisher = {American Physical Society},
  doi = {10.1103/PhysRevD.39.3159},
  url = {https://link.aps.org/doi/10.1103/PhysRevD.39.3159}
}

@article{Rosa2017,
    author = "Rosa, Jo\~ao Lu\'\i{}s and Carloni, Sante and Lemos, Jos\'e Pizarro de Sande e and Lobo, Francisco Sab\'elio Nobrega",
    title = "{Cosmological solutions in generalized hybrid metric-Palatini gravity}",
    eprint = "1703.03335",
    archivePrefix = "arXiv",
    primaryClass = "gr-qc",
    doi = "10.1103/PhysRevD.95.124035",
    journal = "Phys. Rev. D",
    volume = "95",
    number = "12",
    pages = "124035",
    year = "2017"
}

@article{Sotiriou:2008rp,
    author = "Sotiriou, Thomas P. and Faraoni, Valerio",
    title = "{f(R) Theories Of Gravity}",
    eprint = "0805.1726",
    archivePrefix = "arXiv",
    primaryClass = "gr-qc",
    doi = "10.1103/RevModPhys.82.451",
    journal = "Rev. Mod. Phys.",
    volume = "82",
    pages = "451--497",
    year = "2010"
}

@ARTICLE{felice,
       author = {{De Felice}, Antonio and {Tsujikawa}, Shinji},
        title = "{$f(R)$ Theories}",
      journal = {Living Reviews in Relativity},
         year = 2010,
        month = jun,
       volume = {13},
       number = {1},
          eid = {3},
        pages = {3},
          doi = {10.12942/lrr-2010-3},
archivePrefix = {arXiv},
       eprint = {1002.4928},
 primaryClass = {gr-qc},
       adsurl = {https://ui.adsabs.harvard.edu/abs/2010LRR....13....3D}
}

@article{tamanini,
  title = {Generalized hybrid metric-Palatini gravity},
  author = {Tamanini, N. and B\"ohmer, C. G.},
  journal = {Phys. Rev. D},
  volume = {87},
  issue = {8},
  pages = {084031},
  numpages = {11},
  year = {2013},
  month = {},
  publisher = {American Physical Society},
  doi = {10.1103/PhysRevD.87.084031},
  url = {https://link.aps.org/doi/10.1103/PhysRevD.87.084031}
}

@article{Berti2015,
    author = "Berti, Emanuele and others",
    title = "{Testing General Relativity with Present and Future Astrophysical Observations}",
    eprint = "1501.07274",
    archivePrefix = "arXiv",
    primaryClass = "gr-qc",
    doi = "10.1088/0264-9381/32/24/243001",
    journal = "Class. Quant. Grav.",
    volume = "32",
    pages = "243001",
    year = "2015"
}

@article{Clifton2011,
    author = "Clifton, Timothy and Ferreira, Pedro G. and Padilla, Antonio and Skordis, Constantinos",
    title = "{Modified Gravity and Cosmology}",
    eprint = "1106.2476",
    archivePrefix = "arXiv",
    primaryClass = "astro-ph.CO",
    doi = "10.1016/j.physrep.2012.01.001",
    journal = "Phys. Rept.",
    volume = "513",
    pages = "1--189",
    year = "2012"
}

\section*{Acknowledgments}

This study was financed in part by the
Coordenação de Aperfeiçoamento de Pessoal de Nível Superior - Brasil (CAPES) - Finance Code 001.

\end{document}